\documentclass[12pt,a4paper]{article}

\usepackage{setspace}
\usepackage{mathtools,amsmath,amssymb,amsthm,color,url}
\usepackage{mathrsfs,bm,comment,bbm}
\usepackage{float}
\usepackage{cases}
\usepackage{times}
\usepackage{array}
\usepackage{multirow}

\usepackage[pdftex]{graphicx}
\usepackage {booktabs}

\usepackage{natbib}

\theoremstyle{plain}

\newtheorem{prop}{Proposition}[section]

\theoremstyle{definition}

\makeatletter
 
  \@addtoreset{equation}{section}
\makeatother

\usepackage[top=20mm,bottom=27mm,left=25mm,right=25mm]{geometry}

\usepackage{titlesec}
\titleformat*{\section}{\large\bfseries}
\titleformat*{\subsection}{\it}

\doublespacing

  \allowdisplaybreaks
 \begin{document}

\title{Global-local shrinkage priors for modeling random effects in multivariate spatial small area estimation}
\author{Shushi Nishina$^1$, Takahiro Onizuka$^2$ and Shintaro Hashimoto$^1$\\
$^1$ Department of Mathematics, Hiroshima University\\
$^2$ Graduate School of Social Sciences, Chiba University}
\date{\today}

\maketitle

\begin{abstract}

Small area estimation (SAE) plays a central role in survey statistics and epidemiology, providing reliable estimates for domains with limited sample sizes. The multivariate Fay-Herriot model has been extensively used for this purpose, because it enhances estimation accuracy by borrowing strength across multiple correlated variables. In this paper, we develop a Bayesian extension of the multivariate Fay-Herriot model that enables flexible, component-specific shrinkage of the random effects. The proposed approach employs global-local priors formulated through a sandwich mixture representation, allowing adaptive regularization of each element of the random-effect vectors. This construction yields greater robustness and prevents excessive shrinkage in areas exhibiting strong underlying signals. In addition, we incorporate spatial dependence into the model to account for geographical correlation across small areas. The resulting spatial multivariate framework simultaneously exploits cross-variable relationships and spatial structure, yielding improved estimation efficiency. The utility of the proposed method is demonstrated through simulation studies and an empirical application to real survey data.

\end{abstract}

\noindent
{\bf Keywords}: Horseshoe prior; Markov chain Monte Carlo; Multivariate Fay-Herriot model;  Small area estimation; Spatial random effects

\section{Introduction}
\label{sec:1}

Small area estimation (SAE) has become an indispensable framework in disciplines such as survey statistics and epidemiology, where reliable estimates are required for domains with limited or no reliable direct data \citep{rao2015small, sugasawa2020small}. Among various SAE approaches, the Fay-Herriot model \citep{fay1979estimates}, a widely used area-level model, has been particularly influential due to its flexibility and computational efficiency. Multivariate extensions of the Fay-Herriot model have attracted growing attention, as they exploit correlations among multiple related variables to enhance estimation accuracy \citep{fay1987application, datta1998multivariate}. Moreover, when spatial information is available, explicitly modeling spatial dependence can further improve estimation by incorporating geographic structure. Several spatially correlated versions of the Fay-Herriot model have been proposed \citep[see, e.g.,][]{porter2014spatial, porter2015small, chung2022bayesian, tang2023global, wang2025variational}; however, the integration of spatial dependence with flexible shrinkage mechanisms remains relatively underexplored.

An important aspect of SAE modeling lies in the specification and assessment of random effects. In many practical applications, certain areas may not require random effects; thus, over-shrinkage or inappropriate inclusion of such effects can lead to biased or inefficient estimators. Consequently, determining the presence or absence of random effects, or, more generally, allowing for flexible shrinkage, is crucial for enhancing model robustness and interpretability.
There has been extensive research on testing for random effects within the univariate small area estimation framework. For example, \citet{datta2011model} proposed a test-based approach under the null hypothesis that the common random effect variance equals zero. Over the past decade, studies addressing the sparsity of random effect parameters in small area estimation have also emerged. Although the Fay-Herriot model assumes a common area-level variance component across all small areas, random effects may be unnecessary in many of them. This assumption can be particularly problematic when the number of small areas is very large. 
To address this issue, \citet{datta2015small} introduced a two-component mixture distribution for the random effect parameters, also known as the spike-and-slab prior \citep[see also][]{chakraborty2016two}. \citet{tang2018modeling} further proposed the use of global-local shrinkage priors for univariate small area estimation and established several theoretical properties of the resulting shrinkage factors. This approach builds upon the expanding literature on Bayesian shrinkage priors such as the horseshoe prior \citep{carvalho2010horseshoe}, normal-gamma prior \citep{griffin2010inference} and Dirichlet-Laplace prior \citep{bhattacharya2015dirichlet} which provide a principled framework for borrowing strength while encouraging sparsity. More recently, \citet{ghosh2022multivariate} extended the work of \citet{tang2018modeling} to a multivariate setting by introducing local parameters for each area, while \citet{tang2023global} considered the application of global-local priors to spatial small area estimation with univariate response variables.

In this paper, we develop a new Bayesian framework for multivariate spatial small area estimation by introducing global-local shrinkage priors within the multivariate spatial Fay-Herriot model. The proposed approach enables adaptive and spatially informed regularization of random effects, providing a flexible mechanism for borrowing strength both across areas and across multiple correlated variables.
Specifically, we construct a novel shrinkage prior that combines a sandwich scale-mixture representation with a multivariate conditional autoregressive (MCAR) structure. This hierarchical formulation introduces local parameters whose number equals the product of the dimension of the multivariate direct estimates, thereby achieving element-wise, spatially correlated shrinkage of the random-effect vectors. For modeling these local parameters, we mainly consider both the normal-gamma and horseshoe priors, representing two complementary regimes of sparsity: the horseshoe prior performs well under strong sparsity, whereas the normal-gamma prior may be more effective in moderately sparse settings. Importantly, the proposed framework is general and can accommodate alternative shrinkage priors according to the needs of the analysis.
Our method extends and generalizes the work of \citet{ghosh2022multivariate} by incorporating spatial dependence and adaptive shrinkage within a unified Bayesian structure. To the best of our knowledge, this is the first attempt to model random effects in multivariate spatial small area estimation using global-local shrinkage priors. This development fills an important methodological gap, as spatial small area models have traditionally relied on homogeneous random-effect structures that may be overly restrictive in high-dimensional or heterogeneous settings.
We further develop a Markov chain Monte Carlo (MCMC) algorithm for posterior inference and demonstrate, through simulation studies, that the proposed method achieves improved estimation accuracy and robustness compared with existing approaches. Finally, we illustrate its practical utility using data from the West Census Region based on the 2022 five-year American Community Survey (ACS) on median household income and poverty rate.

The remainder of this paper is organized as follows.
Section~\ref{sec:2} provides a brief overview of the multivariate Fay-Herriot model and introduces the modeling of random effects using global-local shrinkage priors.
In Section~\ref{sec:3}, we propose a new multivariate spatial Fay-Herriot model that incorporates shrinkage priors through a sandwich scale-mixture structure.
Section~\ref{sec:4} presents the results of numerical experiments, and Section~\ref{sec:5} illustrates the proposed methodology through an application to real data.
Finally, Section~\ref{sec:6} concludes the paper with a discussion and possible directions for future research. 
R code implementing the proposed methods is available in the GitHub repository \url{https://github.com/Takahiro-Onizuka/GL-MCAR}.

\section{Multivariate FH model and sparsity for random effects}
\label{sec:2}

We consider multivariate small area estimation based on area-level models. The multivariate Fay-Herriot (FH) model \citep{fay1987application} is a multivariate extension of the Fay-Herriot model \citep{fay1979estimates}. The multivariate FH model is defined by
\begin{align}\label{FHmodel}
\begin{split}
\bm{y}_i &= \bm{\theta}_i + \bm{\varepsilon}_i,\quad \bm{\theta}_i =\mathbf{X}_i \bm{\beta} + \bm{u}_i,\quad \bm{\varepsilon}_i\sim \mathcal{N}_k (\bm{0}, \mathbf{V}_i), \\
\bm{u}_i& \sim \mathcal{N}_{k}(\bm{0}, \mathbf{\Psi}), \quad i=1,\dots,m,
\end{split}
\end{align}
where $\bm{y}_i$ is a $k\times 1$ vector, $\mathbf{X}_i$ is a $k\times s$ matrix, and $\bm{\beta}$ is a $s\times 1$ vector. $\mathbf{V}_i$ is a known covariance matrix and $\mathbf{\Psi}$ is a $k\times k$ covariance matrix of the random effect vector $\bm{u}_i$. Let $\bm{y}=(\bm{y}_1^{\top},\dots,\bm{y}_m^{\top})^{\top} \in \mathbb{R}^{mk}$, $\mathbf{X}=(\mathbf{X}_1^{\top},\dots, \mathbf{X}_m^{\top})^{\top} \in \mathbb{R}^{mk\times s}$, $\bm{u}=(\bm{u}_1^{\top},\dots, \bm{u}_m^{\top})^{\top} \in \mathbb{R}^{mk}$ and $\bm{\varepsilon}=(\bm{\varepsilon}_1^{\top},\dots, \bm{\varepsilon}_m^{\top})^{\top}\in \mathbb{R}^{mk}$. The model is expressed as 
\[\bm{y}=\mathbf{X}\bm{\beta}+\bm{u}+\bm{\varepsilon}, \quad \bm{u}\sim \mathcal{N}_{mk}(\bm{0}, \mathbf{I}_m \otimes \mathbf{\Psi}),\quad \bm{\varepsilon}\sim \mathcal{N}_{mk}(\bm{0},\mathbf{V}),\] 
where $\otimes$ is Kronecker product of matrices  and $\mathbf{V}=\mathrm{block \ diag}(\mathbf{V}_1,\dots, \mathbf{V}_m) \in \mathbb{R}^{mk\times mk}$. We assume that $\mathbf{X}$ is of full rank. For example, we consider the crop data analyzed by \cite{Battese01031988}, who studied the data using the nested error regression model. For the $i$th county, let $y_{i1}$ and $y_{i2}$ be survey data of average areas of corn and soybean, respectively. Also let $x_{i1}$ and $x_{i2}$ be satellite data of average areas of corn and soybean, respectively. In this case, $\bm{y}_i$, $\mathbf{X}_i$ and $\bm{\beta}$ correspond to 
\begin{align*} 
\bm{y}_i=(y_{i1}, y_{i2})^{\top},\quad \mathbf{X}_i=\begin{pmatrix}
1 & x_{i1} & x_{i2} & 0 & 0 & 0 \\
0 & 0 & 0 & 1 & x_{i1} & x_{i2}
\end{pmatrix}, \quad \bm{\beta}=(\beta_1,\dots,\beta_6)^{\top}
\end{align*}
for $k=2$ and $s=6$. In small area estimation, our aim is to predict $\bm{\theta}_i$ for each area. Although many approaches have been proposed for efficiently predicting $\bm{\theta}_i$, we focus on the hierarchical Bayesian approach \citep[e.g.,][]{datta1999hierarchical} in this paper. 

Although the multivariate Fay-Herriot (FH) model has received relatively less attention than its univariate counterpart, several important theoretical and applied studies have been conducted. From a frequentist perspective, \citet{benavent2016multivariate} considered multivariate FH models with autoregressive covariance structures for the random-effect vectors and investigated their mean squared errors. \citet{ITO202112} studied empirical Bayes confidence regions for model~\eqref{FHmodel}. In the Bayesian framework, \citet{porter2015small} proposed multivariate spatial FH models that incorporate multivariate conditional autoregressive (MCAR) priors through a hierarchical Bayesian approach.

The determination of whether multivariate random effects are present remains a critical methodological challenge. As discussed in Section \ref{sec:1}, substantial progress has been made in the univariate setting, where Bayesian shrinkage estimation methods for random effects based on global-local shrinkage priors have been extensively developed. In contrast, methodological advances for the multivariate case are still relatively scarce.
Recently, \cite{ghosh2022multivariate} extended global-local priors to the multivariate setting. Based on model~\eqref{FHmodel}, they modeled the random effect vector as
\begin{align}\label{Ghosh-prior}
\bm{u}_i \sim \mathcal{N}_{k}(\bm{0}, \lambda_i^2 \mathbf{\Sigma}), \quad \lambda_i \sim \pi(\lambda_i), \quad \mathbf{\Sigma}\sim \pi(\mathbf{\Sigma}), \quad i=1,\dots,m,
\end{align}
where $\lambda_i$ is a local shrinkage parameter for area $i$, and $\mathbf{\Sigma}$ is a $k \times k$ covariance matrix. This prior is both simple in form and theoretically tractable, and the authors derived theoretical results on the concentration of the shrinkage factor. The approach performs well in most practical settings.
Nonetheless, the model of \citet{ghosh2022multivariate} has two limitations. First, since each area is assigned a single local parameter $\lambda_i$, element-wise shrinkage within each area is not possible. Second, the model does not account for spatial dependence. Element-wise shrinkage is particularly important for capturing heterogeneity among observed variables within a given area. For instance, in a specific region, either median income or median rent may take extreme values relative to other areas. If such a region represents a hotspot, it is desirable that the model can appropriately shrink each component to accurately reflect the underlying pattern. To the best of our knowledge, element-wise shrinkage of random effect vectors has not yet been incorporated into multivariate small area models.
To address these limitations, the next section proposes a hierarchical Bayesian model for multivariate small area estimation that achieves element-wise shrinkage of the random effect vectors $\bm{u}_i$ while simultaneously incorporating spatial correlation.


\section{Global-local priors based on sandwich-mixture covariance structure for multivariate spatial FH models}
\label{sec:3}

When spatial information is available for each area, predictive accuracy can be further improved by accounting not only for correlations among variables but also for spatial dependence. In this section, we extend the multivariate small area model introduced in the previous section by incorporating spatial correlation, and propose a novel model that allows for flexible, element-wise shrinkage of the random effects.

\subsection{The proposed methods}
\label{sec:3.1}

Following \cite{gelfand2003proper}, the proper multivariate conditional autoregressive (MCAR) prior is defined as
\begin{align}\label{MCAR}
\bm{u} \sim \mathcal{N}_{mk}(\bm{0}, (\mathbf{D}-\rho \mathbf{W})^{-1} \otimes \mathbf{\Sigma}  ),
\end{align}
where $(\mathbf{D}-\rho \mathbf{W})^{-1} $ is a $m\times m$ matrix, $\mathbf{\Sigma}$ is a $k\times k$ matrix, $\mathbf{W}$ is an $m\times m$ adjacency matrix whose $(i,j)$-th element is $w_{ij}$, $\mathbf{D}$ is a diagonal matrix with the diagonal elements being the row sums of $W$, and $\rho \in [0,1)$ is a spatial dependence parameter. Under this setting, the prior distribution \eqref{MCAR} becomes proper.
Assuming the prior \eqref{MCAR} for random effects, \cite{porter2015small} and \cite{wang2025variational} considered spatial small area estimation via the multivariate Fay-Herriot model: 
\begin{align}\label{model-spatial}
\begin{split}
\bm{y}_i &= \bm{\theta}_i + \bm{\varepsilon}_i,\quad \bm{\theta}_i =\mathbf{X}_i \bm{\beta} + \bm{u}_i,\quad \bm{\varepsilon}_i\sim \mathcal{N}_k (\bm{0}, \mathbf{V}_i), \quad \pi(\bm{\beta})=1,\\
\bm{u} &\sim \mathcal{N}_{mk}(\bm{0}, (\mathbf{D}-\rho \mathbf{W})^{-1} \otimes \mathbf{\Sigma}  ), \quad \mathbf{\Sigma} \sim \pi(\mathbf{\Sigma}), \quad \rho\sim \pi(\rho) ,\quad i=1,\dots,m,
\end{split}
\end{align}
where $\bm{u}$ is a $mk\times 1$ vector of spatial random effect, which follows a separable multivariate CAR (MCAR) distribution \citep{carlin2003hierarchical, gelfand2003proper}. $\mathbf{\Sigma}$ is the $k\times k$ correlation matrix. This model is called separable because the correlation across variables is decoupled from the spatial covariance. The parameter $\rho \in [0,1)$ controls the strength of spatial dependence. If $\rho=0$, $\bm{u}_1,\dots,\bm{u}_m$ are independent. When $\rho=0$ and $\mathbf{D}=\mathbf{I}_m$, the model \eqref{model-spatial} corresponds to the usual multivariate FH model. Moreover, when $\rho = 1$, the model reduces to the intrinsic MCAR model \citep{mardia1988multi}, which corresponds to an improper prior distribution.
Although the resulting posterior distribution may still be proper and can therefore be used for posterior inference, model comparison becomes difficult under an improper prior.
For this reason, we adopt a proper MCAR prior in this study.
In particular, model comparison in Section~\ref{sec:5} is conducted using the Deviance Information Criterion (DIC).

As a prior distribution for $\bm{u}$ in \eqref{model-spatial}, we propose the following prior distribution for $\bm{u}$:
\begin{align}\label{proposal}
\begin{split}
\bm{u} \mid \mathbf{\Sigma}, \mathbf{\Lambda}, \rho,\tau & \sim \mathcal{N}_{mk}(\bm{0},  \tau^2\mathbf{\Lambda}((\mathbf{D}-\rho \mathbf{W})^{-1} \otimes \mathbf{\Sigma}) \mathbf{\Lambda} ),\\
\bm{\lambda_i}=(\lambda_{i1},\dots,\lambda_{i,k})^\top & \sim \prod_{j=1}^k \pi(\lambda_{i,j}), \quad i=1,\dots,m,\\
\tau&\sim \pi(\tau),
\end{split}
\end{align}
where $\mathbf{\Lambda}=\mathrm{blockdiag}(\mathbf{\Lambda}_1,\dots, \mathbf{\Lambda}_m) \in \mathbb{R}^{mk\times mk}$ with $\mathbf{\Lambda}_i=\mathrm{diag}(\lambda_{i1},\dots,\lambda_{i,k})$. 
Following \cite{hamura2025outlier}, we call the prior \eqref{proposal} {\it sandwich mixture prior}. Such sandwich-type covariance matrices have recently been employed in various models. For instance, \cite{tang2023global} utilized them in univariate spatial small area estimation, while \cite{hamura2025outlier} applied them in multivariate robust Bayesian inference. 

From the formulation 
\begin{align*}
\bm{y}\mid \bm{\theta} \sim \mathcal{N}_{mk}(\bm{\theta},\mathbf{V}), \quad \bm{\theta}\mid \bm{\beta},\mathbf{\Sigma},\mathbf{\Lambda}, \rho, \tau\sim \mathcal{N}_{mk}(\mathbf{X}\bm{\beta}, \tau^2\mathbf{\Lambda}((\mathbf{D}-\rho \mathbf{W})^{-1} \otimes \mathbf{\Sigma}) \mathbf{\Lambda}),
\end{align*}
the conditional posterior mean of the small area mean vector is given by
\[\mathrm{E}(\bm{\theta} \mid \bm{\beta},\mathbf{\Sigma}, \mathbf{\Lambda}, \rho, \tau,\bm{y})=(\mathbf{I}_{mk} - \mathbf{Q})\bm{y} +\mathbf{Q}\mathbf{X} \bm{\beta},\]
where $\mathbf{Q}=(\mathbf{V}^{-1} + \mathbf{U}^{-1})^{-1} \mathbf{U}^{-1}$ and $\mathbf{U}=\tau^2\mathbf{\Lambda}((\mathbf{D}-\rho \mathbf{W})^{-1} \otimes \mathbf{\Sigma}) \mathbf{\Lambda}$.
The matrix $\mathbf{Q}$ is called the matrix shrinkage factor. Since the matrix shrinkage factor $\mathbf{Q}$ is not a diagonal matrix unlike theta in the Fay-Herriot model, the interpretation tends to be difficult. In contrast to the prior distribution defined in equation \eqref{Ghosh-prior}, the sandwich-type prior poses theoretical challenges for analyzing the posterior distribution of the shrinkage factor matrix. Following previous studies \cite[e.g.,][]{ghosh2022multivariate, tang2023global}, we also assume $\pi(\bm{\beta}) \propto 1$. 

\begin{prop}[Posterior propriety]\label{propriety}
Under the prior distribution \eqref{proposal}, if the priors of $\mathbf{\Sigma}$, $\rho \in (0,1)$, $\tau$ and $\lambda_{ij}$ are proper, then the posterior density is proper.
\end{prop}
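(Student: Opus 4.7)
The plan is to show that the joint integral of the unnormalized posterior is finite by integrating out the unknowns in the order $\bm{\beta}$, then $\bm{u}$, then the hyperparameters $(\mathbf{\Sigma},\mathbf{\Lambda},\rho,\tau)$, which exploits the fact that only $\bm{\beta}$ carries an improper prior. Because every factor in the joint density is nonnegative, Fubini--Tonelli justifies this reordering, and the task reduces to uniformly bounding the inner $(\bm{\beta},\bm{u})$-integral by a constant that is then integrated against the proper hyperpriors.

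The key step is the innermost integral over $\bm{\beta}$. Since $p(\bm{y}\mid\bm{\beta},\bm{u})$ is the $\mathcal{N}_{mk}(\mathbf{X}\bm{\beta}+\bm{u},\mathbf{V})$ density and $\pi(\bm{\beta})\propto 1$, completing the square in $\bm{\beta}$, which is valid because the full-rank assumption on $\mathbf{X}$ makes $\mathbf{X}^\top\mathbf{V}^{-1}\mathbf{X}$ positive definite, yields
\begin{equation*}
\int p(\bm{y}\mid\bm{\beta},\bm{u})\,d\bm{\beta}=\frac{1}{(2\pi)^{(mk-s)/2}|\mathbf{V}|^{1/2}|\mathbf{X}^\top\mathbf{V}^{-1}\mathbf{X}|^{1/2}}\exp\!\left\{-\tfrac{1}{2}(\bm{y}-\bm{u})^\top\mathbf{P}_0(\bm{y}-\bm{u})\right\},
\end{equation*}
with $\mathbf{P}_0=\mathbf{V}^{-1}-\mathbf{V}^{-1}\mathbf{X}(\mathbf{X}^\top\mathbf{V}^{-1}\mathbf{X})^{-1}\mathbf{X}^\top\mathbf{V}^{-1}\succeq\bm{0}$. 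The exponential factor is at most $1$, so the right-hand side is bounded by a constant $C_0=C_0(\mathbf{X},\mathbf{V})$ that is independent of $\bm{u}$ and of all hyperparameters.

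For each admissible $(\mathbf{\Sigma},\mathbf{\Lambda},\rho,\tau)$, the prior in \eqref{proposal} is a genuine $mk$-variate Gaussian density: $(\mathbf{D}-\rho\mathbf{W})^{-1}$ is positive definite because $\mathbf{D}-\rho\mathbf{W}$ is strictly diagonally dominant for $\rho\in(0,1)$, and $\mathbf{\Sigma}$, $\tau$, $\lambda_{ij}$ supply positive-definite and positive factors. Hence $\int \pi(\bm{u}\mid\mathbf{\Sigma},\mathbf{\Lambda},\rho,\tau)\,d\bm{u}=1$, and the inner $(\bm{\beta},\bm{u})$-integral stays bounded by $C_0$ uniformly in the hyperparameters. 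Integrating this bound against the proper priors $\pi(\mathbf{\Sigma}),\pi(\rho),\pi(\tau),\prod_{i,j}\pi(\lambda_{ij})$ gives a finite total, and posterior propriety follows.

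The only nontrivial obstacle is the $\bm{\beta}$-integral: its finiteness under the improper flat prior depends entirely on the full-rank condition for $\mathbf{X}$ stated right after \eqref{FHmodel}. Once that is secured, the remaining steps are simply normalizations of proper densities and integrations against proper priors, which is why no structural assumption on $\pi(\mathbf{\Sigma}),\pi(\rho),\pi(\tau)$, or $\pi(\lambda_{ij})$ beyond propriety is needed.
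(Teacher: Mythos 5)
Your proof is correct and follows essentially the same route as the paper: integrate out $\bm{\beta}$ under the flat prior (valid by the full-rank assumption on $\mathbf{X}$), bound the resulting exponential of the projected quadratic form by one, and then observe that the Gaussian prior on the random effects together with the proper priors on $\mathbf{\Sigma}$, $\rho$, $\tau$ and the $\lambda_{ij}$ yields a finite integral. The only cosmetic difference is that you work with $\bm{u}$ directly while the paper uses the transformed $\tilde{\bm{u}}$, which does not affect the argument.
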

The proof is given in Appendix~\ref{sec:appA}.

\subsection{Prior specification}
\label{sec:3.2}

We assume that $\mathbf{\Sigma}\sim \mathrm{IW}(\nu_0, \mathbf{B})$ for fixed hyper-parameters $\nu_0$ and $\mathbf{B}$, where $\mathrm{IW}(\nu,\mathbf{S})$ is the inverse Wishart distribution with degrees of freedom $\nu$ and covariance matrix $\mathbf{B}$. We employ $\nu=k$ and $\mathbf{B}=\mathbf{I}_k$ as default values of these hyper-parameters. Also, we assume the discrete uniform prior for $\rho$. Following \cite{gelfand2003proper}, we put equal mass on the following 31 values: $0, 0.05, 0.1, \dots , 0.8, 0.82, 0.84, \dots , 0.90, 0.91, 0.92, \dots , 0.99$. The selection of a prior for local parameters is crucial for global-local shrinkage estimation of $\bm{u}_i$. In the literature, prior distributions for local shrinkage parameter can be divided into two
groups, one with exponential tails and the other with polynomial tails. We consider two priors for global and local parameters.
\begin{itemize}
\item (Normal-gamma prior): If $\tau^2=1$ and $\pi(\lambda_{i,j})=\frac{2 b^a}{\Gamma(a)} \lambda_{i,j}^{2a-1}\exp(-b \lambda_{i,j}^2)$, the corresponding prior is called normal-gamma prior \citep{griffin2010inference, tang2024hierarchical}. When $a=1$, $\lambda_{i,j}^2$ follows the exponential distribution \citep[see also,][]{park2008bayesian}. As the parameter $a$ approaches zero, the contraction effect in the neighborhood of the origin becomes increasingly significant.

\item (Horseshoe prior): If $\pi(\lambda_{i,j})=(2/\pi)(1+\lambda_{i,j}^2)^{-1}$, where $C_+(0,1)$ denotes the standard half-Cauchy distribution and $\tau\sim \mathrm{C}_+(0,1)$, the corresponding prior is called horseshoe prior \citep{carvalho2010horseshoe}. Compared to the normal-gamma prior, the horseshoe prior provides markedly more aggressive shrinkage in the vicinity of zero, yet remains robust in that it applies minimal shrinkage to large signals.
\end{itemize}
Under the prior distribution \eqref{proposal}, the marginal prior density of
$\bm{u}$ evaluated at $\bm{0}$, after integrating out $\bm{\lambda}$, is given by
\begin{align*}
\pi(\bm{u}=\bm{0})
&= C_1 \int
\bigl|
\tau^2 \mathbf{\Lambda}
\bigl((\mathbf{D}-\rho \mathbf{W})^{-1} \otimes \mathbf{\Sigma}\bigr)
\mathbf{\Lambda}
\bigr|^{-1/2}
\prod_{i=1}^m \prod_{j=1}^k \pi(\lambda_{i,j}) \, d\bm{\lambda} \\
&= C_2 \int
|\mathbf{\Lambda}|^{-1}
\prod_{i=1}^m \prod_{j=1}^k \pi(\lambda_{i,j}) \, d\bm{\lambda} = C_2 \int
\prod_{i=1}^m \prod_{j=1}^k
\lambda_{i,j}^{-1} \, \pi(\lambda_{i,j}) \, d\lambda_{i,j},
\end{align*}
where $C_1$ and $C_2$ denote constants that do not depend on
$\bm{\lambda}$.
If $\lambda_{i,j} \sim C_+(0,1)$, that is, a standard half-Cauchy distribution,
the above integral diverges to infinity.
This implies that the marginal prior density of $\bm{u}$ under the half-Cauchy
prior has a pole at $\bm{u}=\bm{0}$, thereby inducing strong shrinkage toward
zero.
The normal-gamma prior does not induce shrinkage as aggressively as the horseshoe prior, and its tails are comparatively lighter. However, it is expected to perform well when the underlying sparsity is moderate. Therefore, in this paper, we employ both of these priors as prior distributions for the local parameters. Although various other types of shrinkage priors exist and they are also conceptually applicable, it has been empirically observed that they tend to exhibit broadly similar performance in practical applications \citep[e.g.,][]{ghosh2022multivariate}.

\subsection{Posterior computation}
\label{sec:3.3}

We present an MCMC algorithm for posterior computation. 
The random effect vector $\bm{u}$ can be expressed as 
\begin{align}\label{transformed-u}
\bm{u}=\mathbf{\Lambda}\tilde{\bm{u}}, \quad \tilde{\bm{u}} \sim \mathcal{N}_{mk}(\bm{0}, \tau^2( (\mathbf{D}-\rho \mathbf{W})^{-1} \otimes \mathbf{\Sigma})).
\end{align}
Furthermore, from the definition of the proper MCAR model, we note that the full conditional prior distribution of $\tilde{\bm{u}}_i$ is given by
\begin{align}\label{MCAR-fullcond}
\tilde{\bm{u}}_i \mid \{\tilde{\bm{u}}_j\}_{j\ne i} \sim \mathcal{N}_k\left(\rho\sum_{j\sim i} \frac{1}{w_{i+}} \tilde{\bm{u}}_j,\frac{\tau^2}{w_{i+}} \mathbf{\Sigma}\right),
\end{align}
where $w_{i+}=\sum_{j=1}^m w_{ij}$.

The joint posterior density is proportional to
\begin{align*}
\pi(\tilde{\bm{u}}, \bm{\beta}, \mathbf{\Sigma}, \mathbf{\Lambda}, \rho, \tau^2 \mid \bm{y})
&\propto \exp\left[-\frac{1}{2}(\bm{y}-\mathbf{X}\bm{\beta}-\mathbf{\Lambda}\tilde{\bm{u}})^{\top} \mathbf{V}^{-1}(\bm{y}-\mathbf{X}\bm{\beta}-\mathbf{\Lambda}\tilde{\bm{u}})\right]\\
&\quad \times |\tau^2((\mathbf{D}-\rho \mathbf{W})^{-1} \otimes \mathbf{\Sigma})|^{-1/2}  \exp\left[-\frac{1}{2}\tilde{\bm{u}}^{\top} \frac{1}{\tau^2}((\mathbf{D}-\rho \mathbf{W}) \otimes \mathbf{\Sigma}^{-1})\tilde{\bm{u}}\right]\\
&\quad \times |\mathbf{\Sigma}|^{-(\nu_0+k+1)/2} \exp\{-\mathrm{tr}(\mathbf{B} \mathbf{\Sigma}^{-1})/2\} \left(\prod_{i=1}^m \prod_{j=1}^k \pi(\lambda_{i,j}) \right) \pi(\rho) \pi(\tau^2),
\end{align*}
where we set $\pi(\tau^2)=\delta_1$ where $\delta_1$ denotes a point mass at $1$ and $\pi(\lambda_{i,j})=\pi(\lambda_{i,j}\mid a,b)\pi(a)\pi(b)$ for the normal-gamma prior.

Under prior distributions presented in Section \ref{sec:3.2}, we summarize the MCMC algorithm as follows. The derivations are provided in Appendix~\ref{sec:B1}.
\begin{itemize}
\item[-] Draw $\bm{\beta}$ from $\mathcal{N}_s(\bm{m}, (\bm{X}^{\top}\mathbf{V}^{-1}\mathbf{X})^{-1})$, where $\bm{m}=(\mathbf{X}^{\top}\mathbf{V}^{-1}\mathbf{X})^{-1}\mathbf{X}^{\top}\mathbf{V}^{-1}\bm{z}$, $\bm{z}:=\bm{y}-\mathbf{\Lambda}\tilde{\bm{u}}$ and $\mathbf{V}=\mathrm{block diag}(\mathbf{V}_1,\dots, \mathbf{V}_m) \in \mathbb{R}^{mk\times mk}$.

\item[-] Draw $\tilde{\bm{u}}_i$ from $\mathcal{N}_k(\mathbf{A}^{-1}\bm{b}, \mathbf{A}^{-1})$, where $\mathbf{A}=\mathbf{\Lambda}_i\mathbf{V}_i^{-1}\mathbf{\Lambda}_i +\mathbf{\Omega}_{\mathrm{CAR},i}^{-1}$ and $\bm{b}=\mathbf{\Lambda}_i \mathbf{V}_i^{-1}\bm{\xi}_i + \mathbf{\Omega}_{\mathrm{CAR},i}^{-1} \bm{\mu}_{\mathrm{CAR},i}$ with $\bm{\mu}_{\mathrm{CAR},i}=\rho\sum_{j\sim i} \frac{1}{w_{i+}} \tilde{\bm{u}}_j$, $\mathbf{\Omega}_{\mathrm{CAR},i}=\frac{\tau^2}{w_{i+}} \mathbf{\Sigma}$, and $\bm{\xi}_i=\bm{y}_i-\mathbf{X}_i\bm{\beta}$.

\item[-] Draw $\mathbf{\Sigma}$ from $\mathrm{IW}(m+\nu_0, \mathbf{B}+\tau^{-2}\tilde{\mathbf{U}}(\mathbf{D}-\rho \mathbf{W}) \tilde{\mathbf{U}}^{\top})$, where $\tilde{\mathbf{U}}:=(\tilde{\bm{u}}_1,\dots, \tilde{\bm{u}}_m) \in \mathbb{R}^{k\times m}$.

\item[-] Draw $\rho$ using the full conditional distribution
\begin{align*}
\pi(\rho \mid -) \propto  |\mathbf{D}-\rho \mathbf{W}|^{k/2} \times \exp\left[-\frac{1}{2} \tau^{-2}\tilde{\bm{u}}^{\top}  ((\mathbf{D}-\rho \mathbf{W}) \otimes \mathbf{\Sigma}^{-1})\tilde{\bm{u}}\right].
\end{align*}

\item[-] Draw $\bm{\lambda}_i$ using the full conditional distribution
\begin{align}
\pi(\bm{\lambda}_i \mid -) 
\propto  \exp\left[-\frac{1}{2}\left(\frac{\bm{y}_i-\mathbf{X}_i\bm{\beta}}{\tilde{\bm{u}}_i}-\bm{\lambda}_i\right)^{\top} \tilde{\mathbf{U}}_i\mathbf{V}_i^{-1}\tilde{\mathbf{U}}_i\left(\frac{\bm{y}_i-\mathbf{X}_i\bm{\beta}}{\tilde{\bm{u}}_i}-\bm{\lambda}_i\right)\right] \prod_{j=1}^k\pi(\lambda_{i,j}), \label{full-cond-lam}
\end{align}
where the fraction $\frac{\bm{y}_i-\mathbf{X}_i\bm{\beta}}{\tilde{\bm{u}}_i}$ is taken elementwise and $\tilde{\mathbf{U}}_i=\mathrm{diag}(\tilde{u}_{i1},\dots,\tilde{u}_{ik})$.

\item[-] Sampling of $\tau^2$ (horseshoe prior):
Using 
\[\tau\sim \mathrm{C}_+(0,1) \iff \tau^2\mid \psi \sim \mathrm{IG}(1/2, 1/\psi), \quad \psi\sim \mathrm{IG}(1/2,1),\]
sample $\tau^2$ from $\mathrm{IG}\left(\frac{mk+1}{2},\frac{1}{2}\tilde{\bm{u}}^{\top} ((\mathbf{D}-\rho \mathbf{W}) \otimes \mathbf{\Sigma}^{-1})\tilde{\bm{u}}  + \frac{1}{\psi} \right)$ and sample $\psi$ from $\mathrm{IG}\left(\frac{1}{2},\frac{1}{\tau^2} + 1 \right)$

\item[-] Sampling of $a$ and $b$ (normal-gamma prior):
Under the prior $a\sim U(0,1)$ and $b\sim \mathrm{Ga}(c,d)$, sample $b\sim \mathrm{Ga}\left(amk + c, \sum_{i=1}^m \sum_{j=1}^k \lambda_{i,j}^2 + d \right)$ and sample $a$ using the full conditional distribution
\[\pi(a\mid -)\propto \prod_{i=1}^m \prod_{j=1}^k \frac{b^a}{\Gamma(a)} \lambda_{i,j}^{2a-1}= \left(\frac{b^a}{\Gamma(a)}\right)^{mk} \left( \prod_{i=1}^m \prod_{j=1}^k \lambda_{i,j} \right)^{2a-1}.\]
\end{itemize}
The first term on the right-hand side of \eqref{full-cond-lam} is proportional to the normal density but with the constraint $\bm{\lambda}_i>\bm{0}$, that is, $\lambda_{ij}>0$ for all $j$. Following \cite{tang2023global}, we employ constraint relaxation by approximating the indicator function $\prod_{j=1}^k 1_{(0,\infty)}(\lambda_{ij})$ by a sigmoid function $\prod_{j=1}^k 1/(1+\exp(-\eta \lambda_{ij}))$ for a large constant $\eta>0$ \citep[see also,][]{ray2020efficient, onizuka2024bayesian}. Then we use the elliptical slice sampler \citep{murray2010elliptical, hahn2019efficient} to draw samples for $\bm{\lambda}_{i}$. The details of the sampling method are provided in Appendix~\ref{sec:appB2}. Since the full conditional distributions of $\rho$ and of $a$ (used in the normal-gamma prior) do not correspond to any standard probability distributions, we employ grid-based sampling for each of them in this study. Furthermore, to reduce the computational burden in sampling $\rho$, we define
$\mathbf{S} = \mathbf{D}^{-1/2}\mathbf{W}\mathbf{D}^{-1/2}$ and use the identity
\[
|\mathbf{D}-\rho \mathbf{W}|
= |\mathbf{D}|\,|\mathbf{I}_m-\rho \mathbf{S}|
= |\mathbf{D}| \prod_{i=1}^m (1-\rho \gamma_i),
\]
where $\gamma_i$ denotes the eigenvalues of $\mathbf{S}$.
Note that these eigenvalues need to be computed only once.

\subsection{Non-Gaussian response}
\label{sec:3.4}

The proposed methods can be extended to some non-Gaussian responses such as binomial and Poisson responses. The binomial distribution is frequently employed to model proportions such as disease prevalence or poverty rates, while the Poisson distribution is commonly used to model the number of cases in a given area. The latter is particularly useful in applied contexts and is often referred to as disease mapping \citep[e.g.,][]{macnab2022bayesian}. Although we do not explore the details in this paper, it is worth noting that the computational complexity increases when moving from the normal distribution to the Poisson case. In fact, it is well known that posterior inference for spatial regression models based on the Poisson likelihood is computationally demanding. On the other hand, by approximating the Poisson model with a negative binomial distribution, it becomes feasible to leverage P\'olya-Gamma data augmentation techniques \citep{polson2013bayesian}, in a manner analogous to the binomial case.


\section{Simulation}
\label{sec:4}

We conduct simulation studies to illustrate the performance of the proposed methods.

\subsection{Setting}
\label{sec:4.1}

We investigate the performance of different priors for estimating the small area means on simulated datasets. We generate data from the model:
\[\bm{y}_i =\mathbf{X}_i \bm{\beta} + \bm{u}_i+ \bm{\varepsilon}_i,\quad  \bm{\varepsilon}_i\sim \mathcal{N}_k (\bm{0}, \mathbf{V}_i), \quad i=1,\dots,m,\]
where $\bm{\beta}=(1,1,1,1)^{\top}$, $\bm{x}_i=(1, x_{i,1})^{\top}$, where $x_{i,1}$ is generated from the standard normal distribution. We consider the bivariate case ($k=2$) and $m=100$, $200$ and $500$. Concerning the dispersion matrices $\mathbf{V}_i$ of sampling errors $\bm{\varepsilon}$, we consider two cases: (a) $0.7\mathbf{I}_k$, $0.6\mathbf{I}_k$, $0.5\mathbf{I}_k$, $0.4\mathbf{I}_k$, $0.3\mathbf{I}_k$; (b) $2.0\mathbf{I}_k$, $0.6\mathbf{I}_k$, $0.5\mathbf{I}_k$, $0.4\mathbf{I}_k$, $0.2\mathbf{I}_k$
\citep[see also][]{datta2005measuring, ito2020robust}. These error covariances are randomly assigned to each area in equal proportions. In the observation-error covariance matrix, the correlations are set to zero, although a more general treatment is possible. In practice, however, direct estimates of correlation coefficients between observations are seldom available, as in the real-world dataset analyzed in Section~\ref{sec:5}. For this reason, and for simplicity, we assume zero correlation.

To model spatial dependence, we constructed an adjacency matrix $\mathbf{W}$ based on a regular lattice structure. Specifically, we considered a rectangular $10\times (m/10)$ grid where each cell corresponds to a spatial unit. Two units are considered adjacent (i.e., $w_{ij}=1$) if they share a common edge (i.e., they are first-order neighbors in the cardinal directions: up, down, left, or right). Diagonal adjacency is not considered. The resulting matrix $\mathbf{W}$ is symmetric and binary, with $w_{ij}=1$ indicating adjacency and $w_{ij}=0$ otherwise.

For generating a true random effect vector $\bm{u} \in \mathbb{R}^{mk}$, we consider the following cases. For all scenarios, we set $\rho=0.95$ and $\mathbf{\Sigma}$ is the correlation matrix with correlation coefficient $0.3$.

\begin{itemize}

\item {\bf (Scenario 1)}: 
Let $\bm{u} \sim \mathcal{N}_{mk}\!\left(\bm{0},\, (\mathbf{D}-\rho\mathbf{W})^{-1} \otimes \mathbf{\Sigma}\right)$.

\item {\bf (Scenario 2)}: 
Let $\bm{u} \sim \mathcal{N}_{mk}\!\left(\bm{\mu},\, (\mathbf{D}-\rho\mathbf{W})^{-1} \otimes \mathbf{\Sigma}\right)$,
where each component of $\bm{\mu}$ is generated from $(1 - z)\delta_{5} + z\delta_{0}$,
with $z \sim \mathrm{Ber}(\omega)$ and $\delta_x$ denoting a point mass at $x$.
We set $\omega = 0.5$.

\item {\bf (Scenario 3)}: 
The same as Scenario 2, except that we set $\omega = 0.8$.

\item {\bf (Scenario 4)}: 
Let $\bm{u} \sim \mathcal{N}_{mk}\!\left(\bm{0},\, 4\left((\mathbf{D}-\rho\mathbf{W})^{-1} \otimes \mathbf{\Sigma}\right)\right)$.
For each variable (i.e., for the vectors $(u_{1,1}, \ldots, u_{m,1})$ and $(u_{1,2}, \ldots, u_{m,2})$),
the lower 50\% of the elements in terms of absolute magnitude are replaced with zero.

\item {\bf (Scenario 5)}: 
Let $\bm{u} \sim \mathcal{N}_{mk}\!\left(\bm{0},\, 4\left((\mathbf{D}-\rho\mathbf{W})^{-1} \otimes \mathbf{\Sigma}\right)\right)$.
For each variable, the lower 80\% of the elements in terms of absolute magnitude are replaced with zero.

\end{itemize}
Scenario~1 assumes a Gaussian distribution for the random effects, which is the standard specification commonly used in spatial Fay--Herriot (FH) models. 
In Scenarios~2 and 3, the random effects follow a two-component mixture distribution that represents a combination of signal and noise: 
values equal to $0$ correspond to noise, while those equal to $\mu = 5$ represent nonzero signal.  
The choice $\mu = 5$ ensures that the signal is sufficiently distinguishable from noise, as smaller values may lead to weak identifiability.
The sparsity level of the non-signal components is controlled by $\omega$. Specifically, $\omega=0.5$ (Scenario 2) represents a setting in which signal and noise are present in equal proportions, whereas $\omega=0.8$ (Scenario 3) corresponds to a setting in which the proportion of signal is smaller.
These scenarios are commonly used when evaluating the performance of shrinkage priors. 
Scenarios~4 and 5 introduce an alternative form of sparsity that is generated directly from Gaussian random effects \citep[see also][]{tang2023global}.  
After drawing $\bm{u}$ from a spatially correlated normal distribution with inflated variance, we enforce sparsity by thresholding: 
for each variable, the lower 50\% (Scenario~4) or 80\% (Scenario~5) of the components in terms of absolute magnitude are set to zero.  
These scenarios differ from Scenarios~2 and 3 in that sparsity arises from magnitude-based thresholding rather than a discrete mixture distribution, and in that some random effects are replaced with exactly zero, thereby allowing us to assess the robustness of the methods against different forms of sparsity.

We compare the following five methods: (I) The proposed model under the horseshoe prior \eqref{proposal} (denoted by {\bf SpaHS}); (II) The proposed model under the normal-gamma prior \eqref{proposal} (denoted by {\bf SpaGa}) (III) The multivariate spatial FH model \eqref{model-spatial} (denoted by {\bf SpaFH}); (IV) the multivariate global-local model under the horseshoe prior \eqref{Ghosh-prior} proposed by \cite{ghosh2022multivariate} (denoted by {\bf HS}); (V) the mutivariate Fay-Herriot model \eqref{FHmodel} (denoted by {\bf FH}). Note that we assume $a \sim \mathrm{DiscreteUniform}(0.01, 0.02,\dots, 0.99, 1.0)$ and $b\sim \mathrm{Ga}(0.001, 0.001)$ for the SpaGa method.
For each of these models, we run MCMC for 2,000 iterations. The first 500 iterations are discarded as the burn-in period. The small area means $\bm{\theta}_i$ are estimated by using the posterior means. 

We evaluated the performance of estimation for $\bm{\theta}$ using average absolute
deviation (AAD) and average squared deviation (ASD), which are defined as follows:
\begin{align*}
\mathrm{AAD}= \frac{1}{mk} \sum_{i=1}^m \sum_{j=1}^k |\hat{\theta}_{ij}-\theta_{ij}|, \quad \mathrm{ASD}=\frac{1}{mk} \sum_{i=1}^m \sum_{j=1}^k (\hat{\theta}_{ij}-\theta_{ij})^2.
\end{align*}
The empirical coverage rate (CP) and average length (AL) of 95\% credible intervals of $\theta_{ij}$ are also
used to evaluate the interval estimates.
For a given situation, we
report the median of the four measures over the 50 datasets.

\subsection{Simulation results}
\label{sec:4.2}

First, we examine the behavior of the local parameters $\lambda$ in the proposed method SpaHS and the existing method HS using a single simulation run. Here, we consider Scenario~5 with $m=500$ and observation variance corresponding to case~(a). Figure~\ref{fig:scenario5-random-effect} displays the distribution of the true random effect vectors and the posterior means of the estimated local parameters under the two methods. Note that for SpaHS, the local parameters are given by
$\bm{\lambda}^{\mathrm{SpaHS}}
=(\lambda_{1,1}, \lambda_{1,2}, \dots, \lambda_{500,1}, \lambda_{500,2})^{\top}$,
whereas for HS they are given by
$\bm{\lambda}^{\mathrm{HS}}
=(\lambda_1, \dots, \lambda_{500})^{\top}$.
The left panel shows the true two-dimensional random effects $\bm{u}_i = (u_{i,1}, u_{i,2})^{\top}$ for each area, where the color indicates the number of nonzero components in $\bm{u}_i$. Specifically, points at the center correspond to $\bm{u}_i = \mathbf{0}$ (zero nonzero components), points lying on one of the axes correspond to cases where only one component of $\bm{u}_i$ is nonzero (one nonzero component), and points off the axes correspond to cases where both components are nonzero (two nonzero components). From this panel, we observe substantial heterogeneity across areas in the magnitude of the random effects at the component level.
The middle and right panels plot the posterior means of the local parameters for HS and SpaHS, respectively, using the same color coding as in the left panel. Since HS performs shrinkage jointly at the area level, the estimated local parameters tend to be uniformly small (red) or large (blue) within each area, while areas with only one nonzero component in $\bm{u}_i$ (green) exhibit intermediate shrinkage. In contrast, SpaHS allows component-wise shrinkage: red points correspond to small values in both dimensions, blue points to large values in both dimensions, and green points are aligned along the axes, indicating that shrinkage is applied selectively to each component.
These results demonstrate that, in such scenarios, the proposed SpaHS method is capable of more flexible and adaptive shrinkage compared with HS.

\begin{figure}[tbp]
    \centering
    \includegraphics[width=\linewidth]{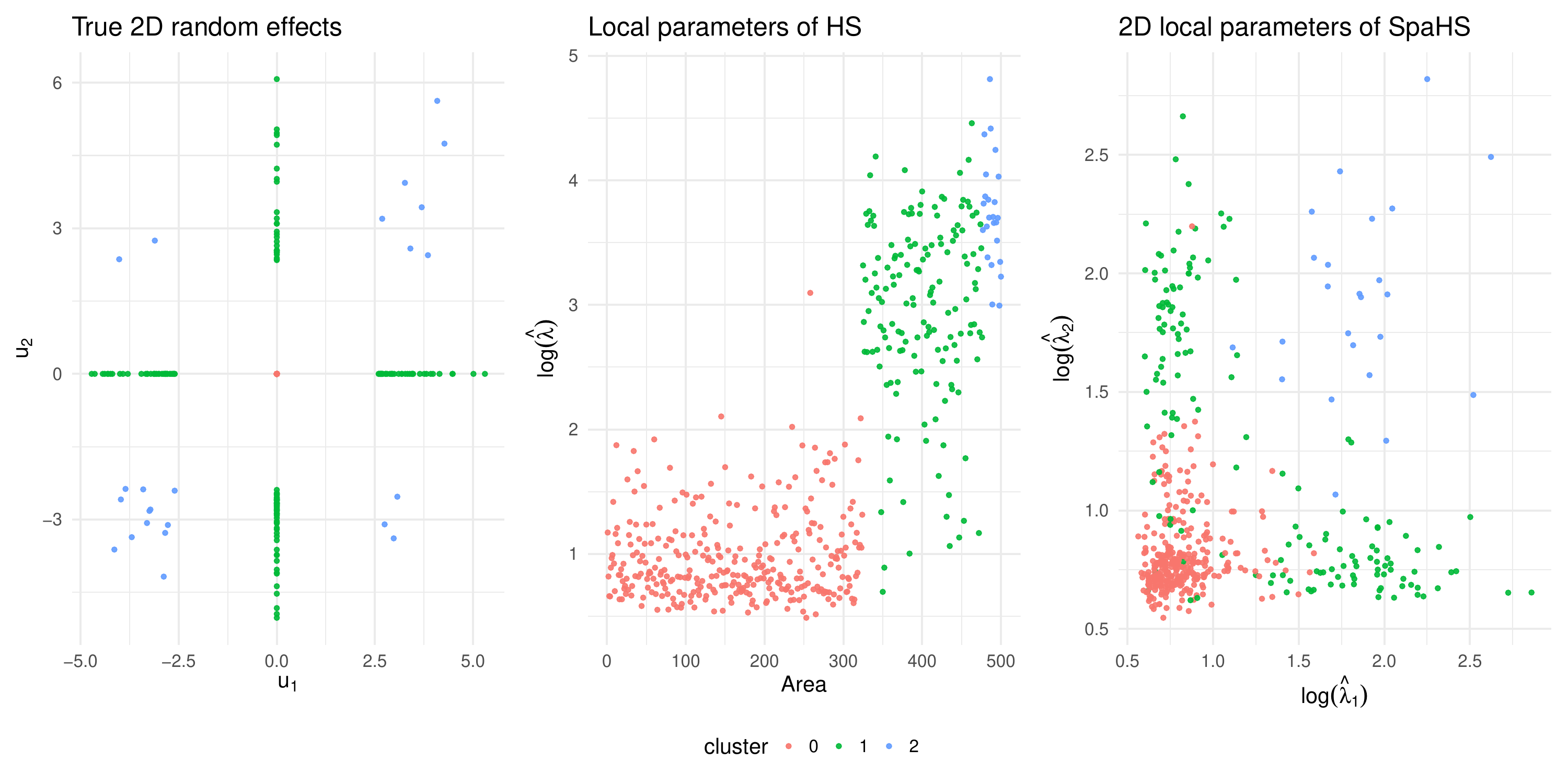}
    \caption{Results from a single simulation run for true random effects (left) and posterior mean estimates of the local parameters under HS (center) and SpaHS (right) for Scenario~5 with $m=500$ and observation variance case~(a). For HS, each area has a single local parameter, and thus the $x$-axis represents the area index. In contrast, SpaHS assigns two local parameters to each area, which are plotted in two dimensions.
}
    \label{fig:scenario5-random-effect}
\end{figure}

Next, we report the simulation results in Figures~\ref{fig:sim1} and \ref{fig:sim2}. Figure~\ref{fig:sim1} corresponds to variance scenario (a), whereas Figure~\ref{fig:sim2} pertains to scenario (b), where the area-specific variances exhibit greater variability.
We first focus on the point estimation results shown in the top two rows of each figure.
In Scenario 1, the SpaFH model performs well across all configurations, which is expected since the data-generating mechanism follows the SpaFH specification. SpaGa achieves the second-best performance in this scenario. This is reasonable since the current version of SpaGa, although less capable of extreme shrinkage than the horseshoe prior, can adaptively control the amount of shrinkage through the parameter $a$, which is estimated from the data via MCMC.
Under Scenario 2, SpaHS and SpaGa perform similarly. In particular, when the sampling error covariance exhibits high area-to-area variability (scenario (b)), SpaHS and SpaGa yield almost identical ASD values, and even the standard FH model without spatial effects performs reasonably well.
In Scenario 3, SpaHS provides the best performance due to the horseshoe prior, which is well suited for sparse structures. In contrast, SpaGa performs slightly worse under these conditions, and the HS model proposed by \cite{ghosh2022multivariate}, which incorporates neither spatial dependence nor element-wise shrinkage, performs comparably to SpaGa.

In the thresholded sparsity scenarios (Scenarios 4 and 5), SpaHS again performs consistently well. When the proportion of nonzero elements is 50\% (Scenario 4), SpaFH also yields good results. Notably, SpaGa -- which performs well in Scenarios 2 and 3 -- shows a clear deterioration in Scenarios 4 and 5. This decline is likely due to the nature of the normal-gamma prior, which induces partial shrinkage toward zero and is therefore less suited to thresholded sparsity structures.

The bottom two rows of Figures~\ref{fig:sim1} and \ref{fig:sim2} summarize the uncertainty quantification results. Although differences across methods are generally small, SpaHS yields shorter credible intervals and coverage probabilities closer to the nominal 95\% level in many scenarios. In Scenarios 4 and 5, where SpaGa performs poorly in terms of point estimation, its credible intervals also noticeably underperform relative to the other methods.

Overall, SpaHS performs particularly well when the underlying signal is sparse. SpaGa also offers stable estimates in Scenarios 1 to 3. From a practical standpoint, one may consider applying both SpaHS and SpaGa models and selecting between them using an appropriate model comparison criterion. We experimented with three different numbers of areas, $m = 100$, $200$, and $500$, and the results were largely consistent across these settings. Because the proposed methods involve $mk$ local parameters, repeated experiments for substantially larger m were not feasible due to computational constraints. Investigating the performance of these methods for larger $m$ remains an interesting direction for future work.

\begin{figure}[htbp]
    \centering
    \includegraphics[width=\linewidth]{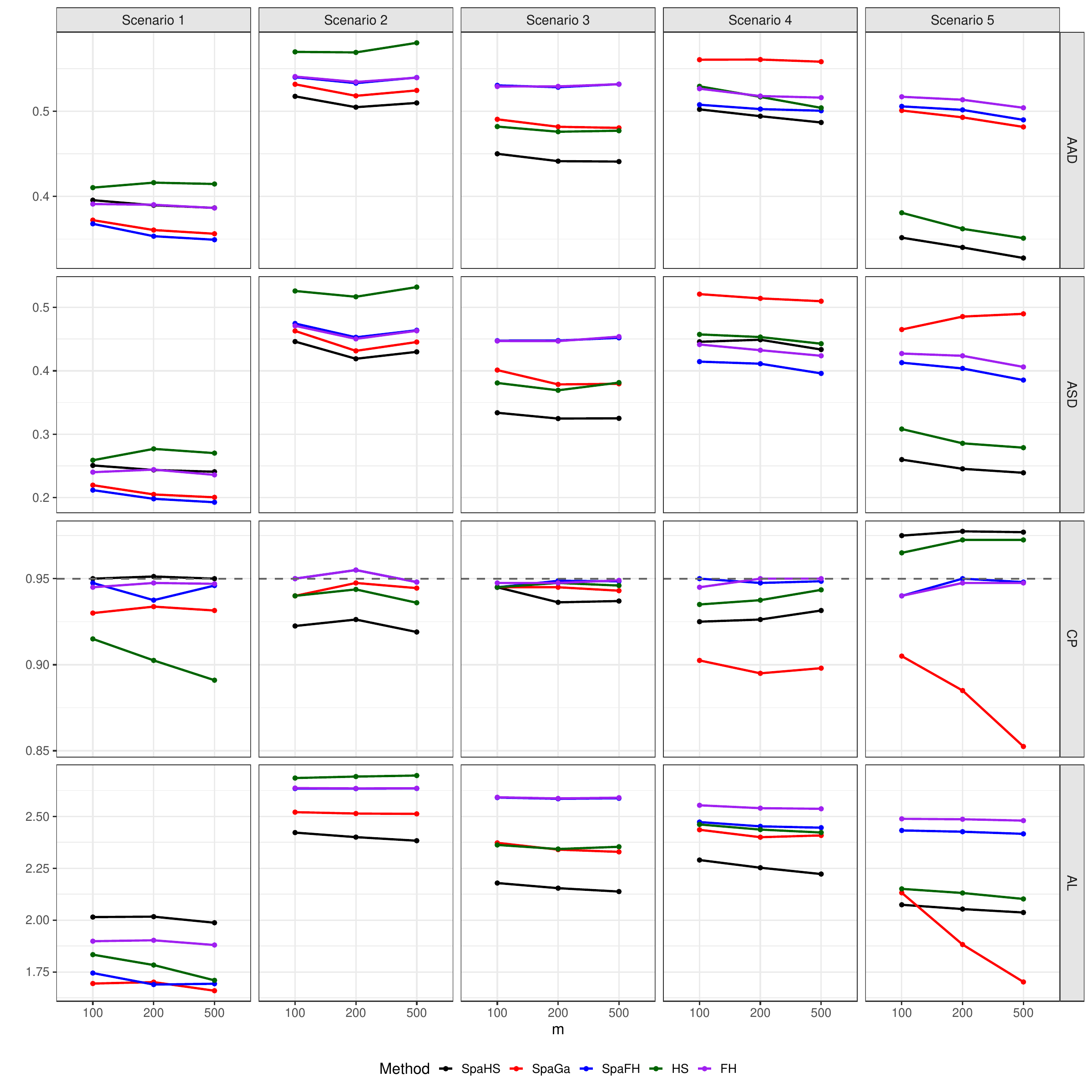}
    \caption{The average absolute
deviation (AAD), average squared deviation (ASD), coverage probability of the 95\% credible interval and average length of the 95\% credible interval under the variance scenario (a). The figure displays the median computed over 50 replications and the horizontal axis is the number of small areas.}
    \label{fig:sim1}
\end{figure}

\begin{figure}[htbp]
    \centering
    \includegraphics[width=\linewidth]{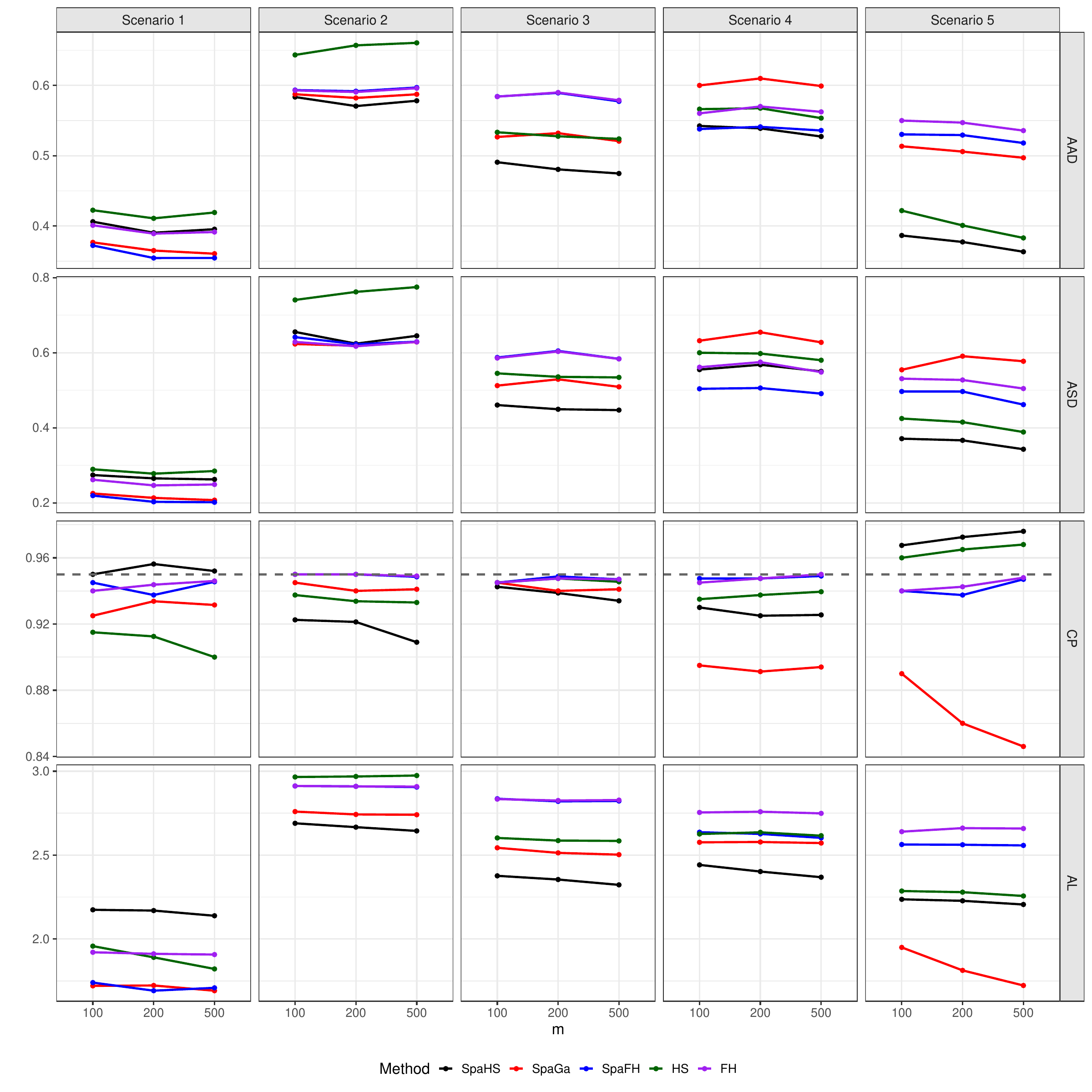}
    \caption{The average absolute
deviation (AAD), average squared deviation (ASD), coverage probability of the 95\% credible interval and average length of the 95\% credible interval under the variance scenario (b). The figure displays the median computed over 50 replications and the horizontal axis is the number of small areas.}
    \label{fig:sim2}
\end{figure}

\section{Real data analysis}
\label{sec:5}

In the context of small area estimation, we analyze county-level data in the West Census Region using two key socioeconomic indicators: median household income and poverty rate. Many counties have unreliable direct estimates due to small sample sizes; thus, jointly modeling these two correlated variables allows us to borrow strength across indicators and improve estimation accuracy.
Moreover, by incorporating spatial information, the model can leverage correlations among neighboring counties, capturing spatial dependencies and identifying areas exhibiting extreme values (hotspots) in income or poverty. This approach provides more reliable and nuanced estimates, which can inform policy decisions and targeted interventions in the region. As a key feature of the proposed method is its ability to accurately identify counties exhibiting extreme values in either income or poverty rate, while simultaneously improving the overall estimation precision across the region.

\subsection{Dataset and setting}
\label{sec:5.1}

We use data from the 5-year American Community Survey (ACS) released in 2022, where the response variables are the estimated median household income and poverty rate, and the percentage of high school graduates is employed as an auxiliary covariate. We focus on the 413 counties located in the West Census Region, excluding those that are geographic enclaves (non-contiguous counties). 
Tang et al. (2023) also analyzed poverty rates in the West Census Region. 
To better satisfy the normality assumption, the median household income and poverty rate were log-transformed, and the corresponding variances were computed using the Delta method. The variances of the log-transformed direct estimates are shown in Figure~\ref{fig:variance-obs}. 
The Moran's I statistics for the log-transformed direct estimates of the median household income and the poverty rate were $0.50$ and $0.35$, respectively (both with $\text{p-value} < 0.001$). These results indicate that both variables exhibit statistically significant positive spatial autocorrelation, suggesting the presence of spatial clustering rather than spatial randomness.
In our study, we handle county-level data, and the sample sizes within each county are sufficiently large. Consequently, the variances of the direct estimates for most areas are relatively moderate. However, we observe that a few regions exhibit comparatively large variances. 
For example, with respect to median income, De Baca County, New Mexico and Esmeralda County, Nevada exhibit relatively high values of the sample variance. In terms of the poverty rate, Eureka County, Nevada and Camas County, Idaho likewise show notably elevated levels.
Such counties tend to exhibit high levels of variance in socioeconomic indicators. 
This can be largely attributed to factors such as extremely small population sizes, the dominance of a limited number of industries, and the susceptibility of local statistics to substantial fluctuations resulting from minor demographic or economic changes. Additionally, survey-based estimates for such counties often involve relatively large sampling errors, further contributing to the observed variance.
Since the ACS data do not provide covariance information between variables, we assume that the covariances among the direct estimates are zero. 

The proposed SpaHS and SpaGa models, as well as the SpaFH model for comparison, are fitted to the data. The prior distributions for SpaHS, SpaGa, and SpaFH were set to be the same as those used in Section 4. For the HS and FH models, we employed the prior distributions specified by \cite{ghosh2022multivariate}. For each model, 20,000 MCMC iterations were performed, discarding the first 10,000 samples as burn-in. The adjacency matrix $\mathbf{W}$ is constructed based on whether counties share a common border. 

Model performance is evaluated using the Deviance Information Criterion (DIC). The DIC proposed by \cite{spiegelhalter2002bayesian} is defined by
$\mathrm{DIC}=D(\bar{\bm{\theta}})+ 2p_D$,
where $p_D=\bar{D}-D(\bar{\bm{\theta}})$ measures the model complexity, $D(\bar{\bm{\theta}})$ is the deviance of the model evaluated at the posterior mean of model parameter $\bm{\theta}$, and $\bar{D}$ is the posterior mean of the deviance. In our models, 
$D(\bm{\theta}) = \sum_{i=1}^m (\bm{y}_i-\bm{\theta}_i)^{\top} \mathbf{V}_i^{-1}(\bm{y}_i-\bm{\theta}_i)$.
Using a posterior sample $\bm{\theta}^{(s)}=(\bm{\theta}_1^{(s)}, \dots, \bm{\theta}_m^{(s)})$ for $s=1,\dots, S$, we obtain
\begin{align*}
\widehat{\mathrm{DIC}}
= \frac{2}{S}\sum_{s=1}^S \sum_{i=1}^m (\bm{y}_i-\bm{\theta}_i^{(s)})^{\top} \mathbf{V}_i^{-1}(\bm{y}_i-\bm{\theta}_i^{(s)}) - \sum_{i=1}^m (\bm{y}_i-\hat{\bm{\theta}}_i)^{\top} \mathbf{V}_i^{-1}(\bm{y}_i-\hat{\bm{\theta}}_i),
\end{align*}
where $\hat{\bm{\theta}}_i=(1/S) \sum_{s=1}^S \bm{\theta}_i^{(s)}$ is the posterior sample mean of $\bm{\theta}_i$.

\begin{figure}[htbp]
    \centering
    \includegraphics[width=0.8\linewidth]{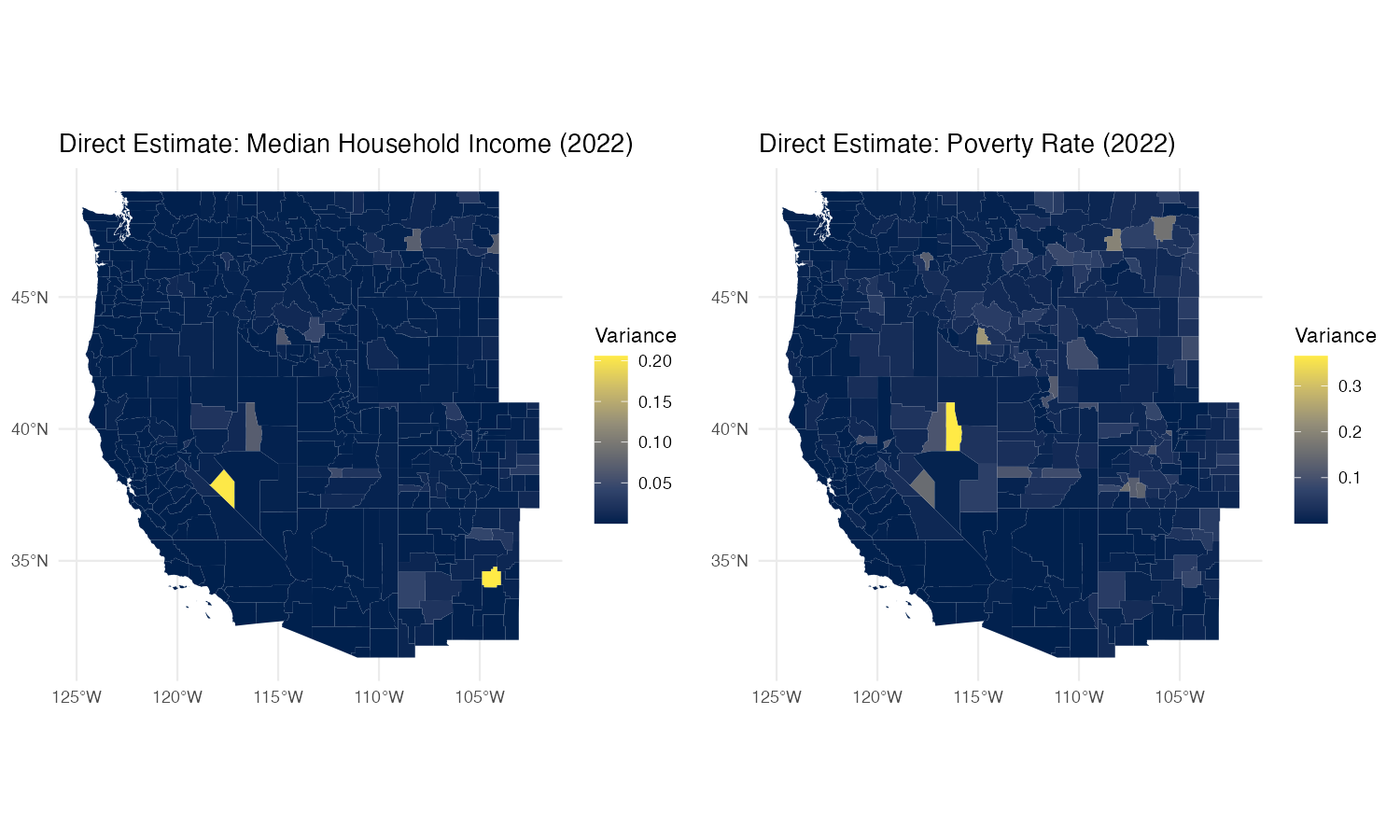}
    \caption{Variance on the log-scale of median household income and poverty rate of direct estimates for 413 counties in West Census Region, based on 5-year period estimates from the 2022 American Community Survey.}
    \label{fig:variance-obs}
\end{figure}

\subsection{Results}
\label{sec:5.2}

Figure~\ref{fig:theta-plot} presents the SpaHS estimates of income and poverty rates (posterior means of $\bm{\theta}$) alongside the corresponding estimates from SpaFH and SpaGa. Relative to SpaFH, the largest discrepancies appear in areas with lower outcome values, where SpaFH tends to slightly overestimate. In contrast, the SpaHS and SpaGa estimates of income are generally similar, while noticeable differences emerge for the poverty rate due to different levels of shrinkage applied by the models.

Figure~\ref{fig:analysis-theta-spatial} shows spatial maps of the SpaHS estimates of $\bm{\theta}$ together with the corresponding direct estimates $\bm{y}_i$. Visual differences in the income estimates are modest. For the poverty rate, however, the largest deviations between the SpaHS small-area means and the direct estimates occur in Petroleum County and Toole County, Montana, located in the northeastern portion of Figure~\ref{fig:analysis-theta-spatial}. This pattern is consistent with the estimated random effects (posterior means of $\bm{u}$) shown in Figure~\ref{fig:analysis-rf-spatial}, which suggest that the random effects exhibit a certain degree of sparsity. Several regions display high or low values (hot or cold spots). The variance of the random effects is larger for the poverty rate than for median income, and the largest posterior mean of SpaHS for the poverty rate is found in Morgan County, Utah, followed by Madison County, Idaho. Both counties show negative random effects and may be considered cold spots. 

\begin{figure}[htbp]
    \centering
    \includegraphics[width=0.8\linewidth]{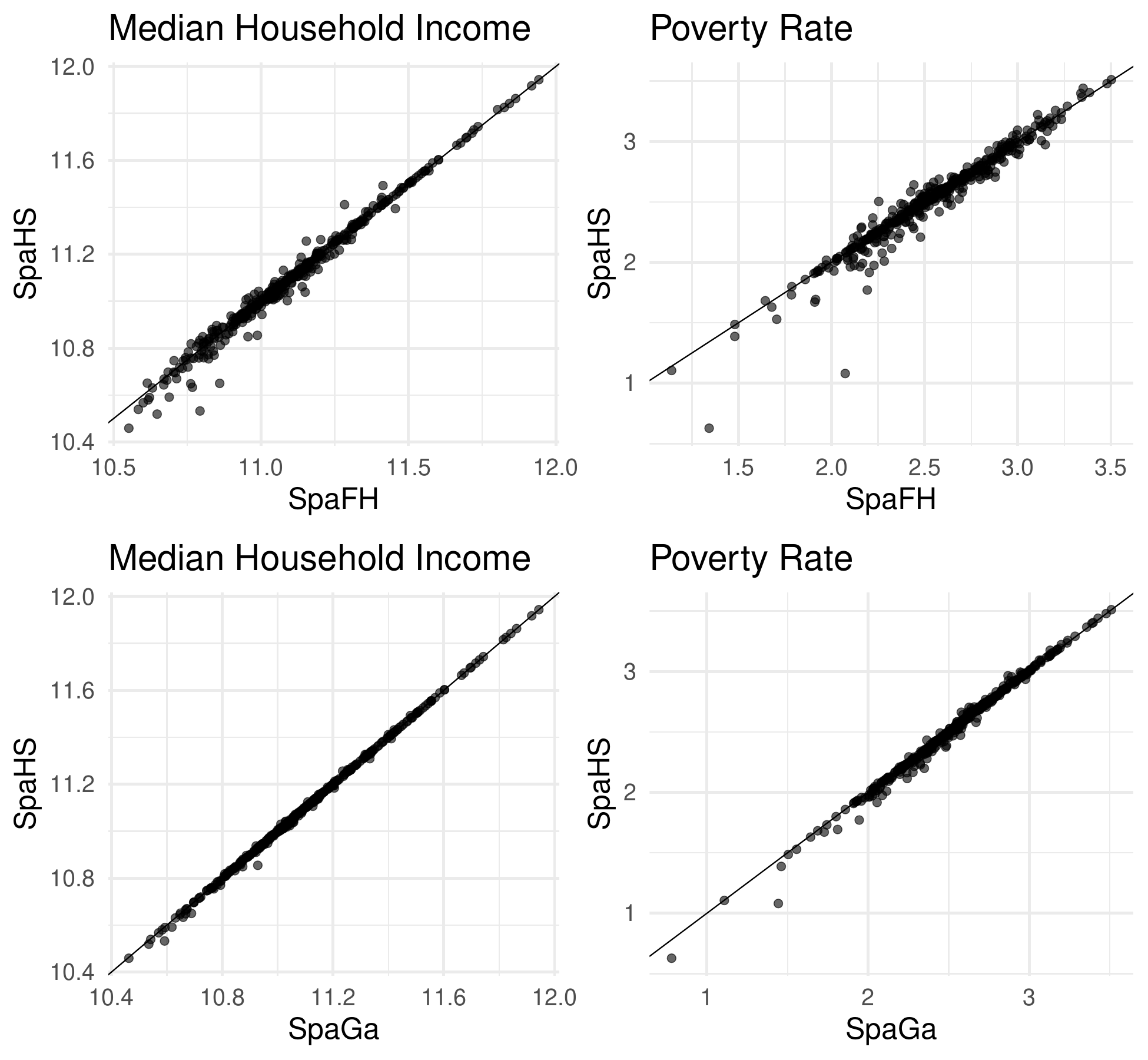}
    \caption{Posterior means of small area mean $\bm{\theta}_i$ for the SpaHS, SpaGa and SpaFH methods.}
    \label{fig:theta-plot}
\end{figure}

\begin{figure}[htbp]
    \centering
   \includegraphics[width=0.8\linewidth]{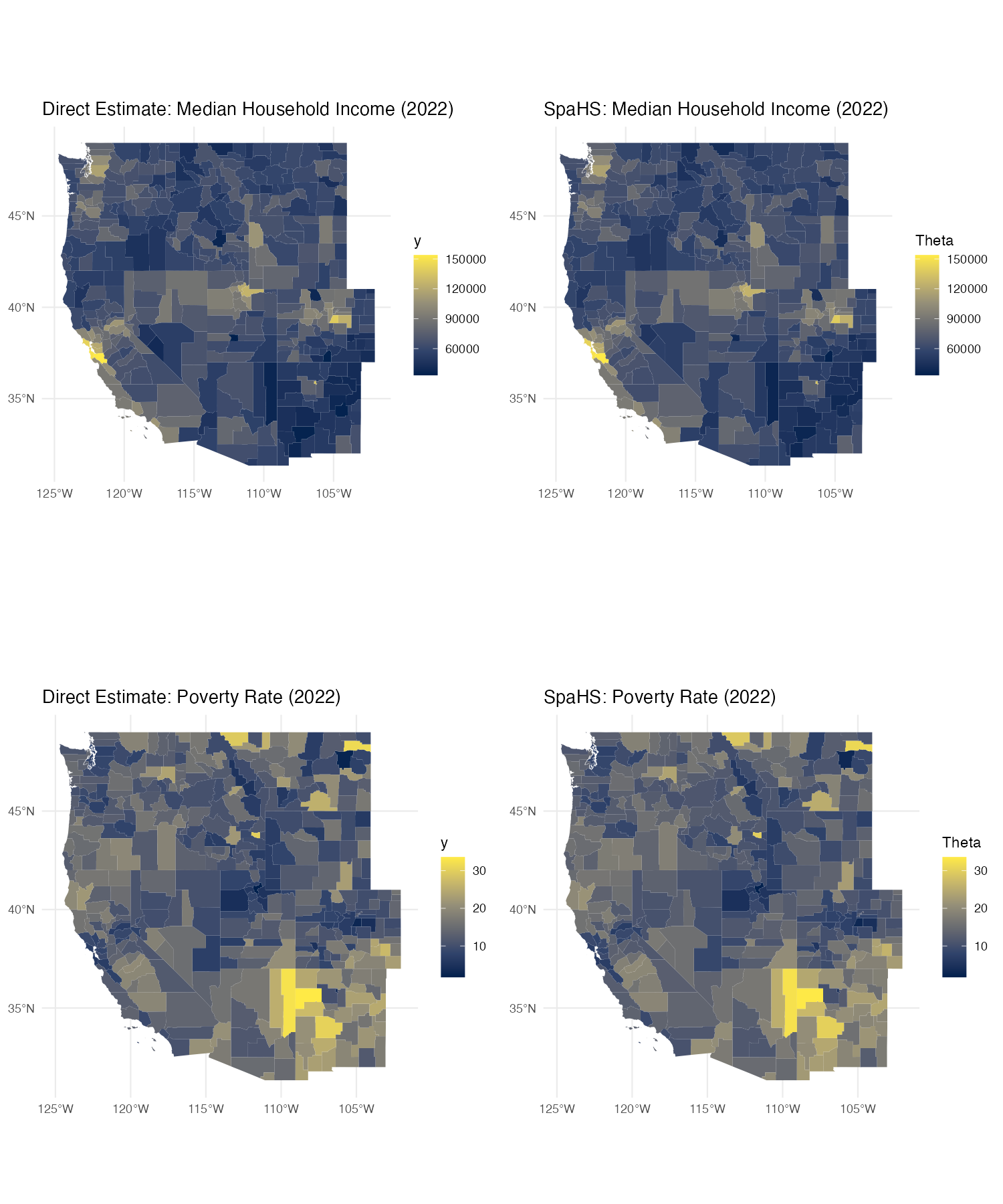}
    \caption{Estimates of median household income and poverty rate for 413 counties in West Census Region, based on 5-year period estimates from the 2022 American Community Survey, including direct estimates and results from the SpaHS, SpaGa and SpaFH methods.}
    \label{fig:analysis-theta-spatial}
\end{figure}

\begin{figure}[htbp]
    \centering
   \includegraphics[width=0.8\linewidth]{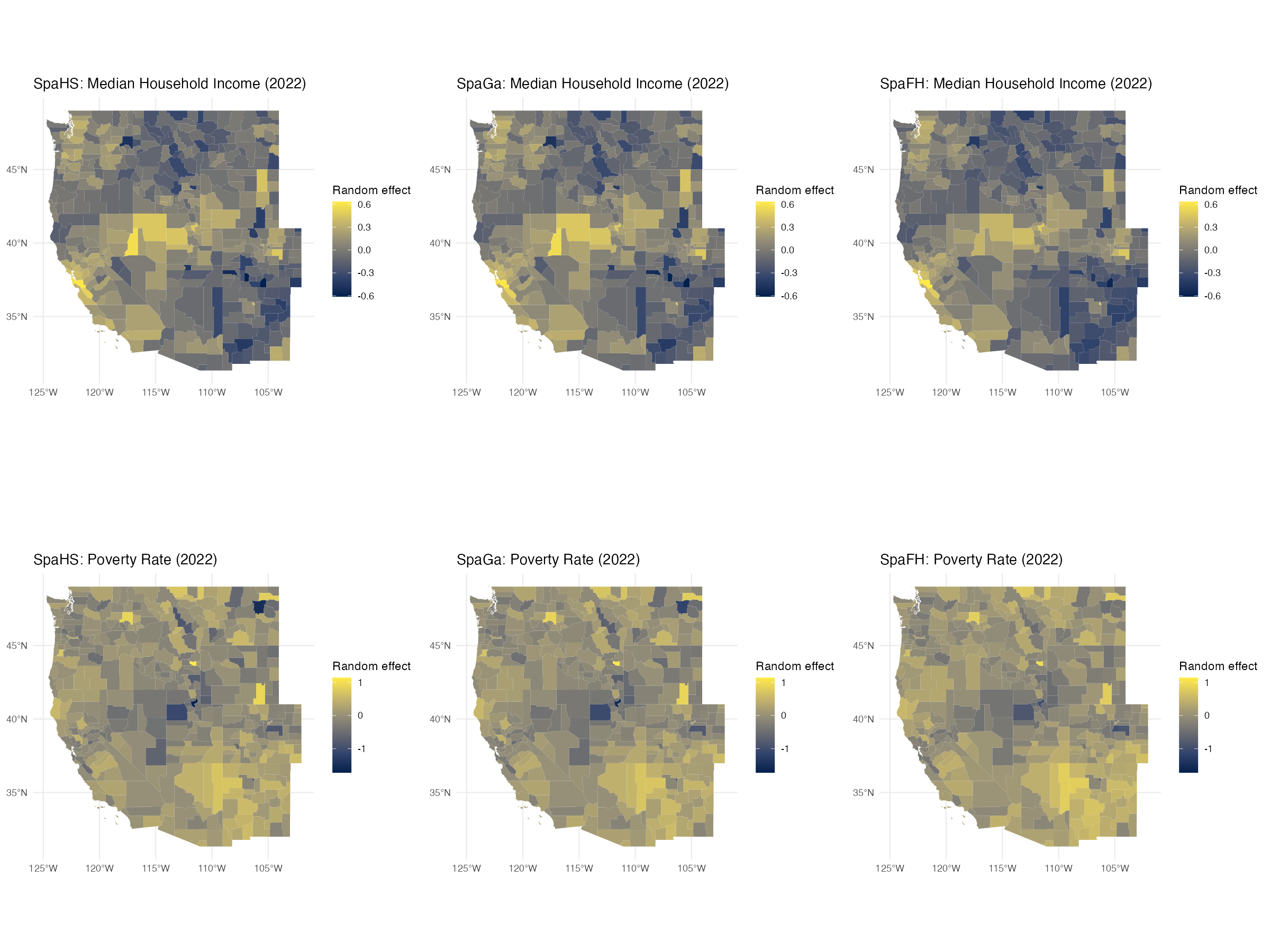}
    \caption{Spatial random effects on the log-scale of median household income and poverty rate for 413 counties in West Census Region, based on 5-year period estimates from the 2022 American Community Survey, including results from the SpaHS, SpaGa and SpaFH methods.}
    \label{fig:analysis-rf-spatial}
\end{figure}

Although these visualizations offer useful insights into the behavior of the models, they do not directly determine which model performs best. Model evaluation based on the DIC values in Table~\ref{tab:DIC} shows that SpaHS provides the best overall fit, followed by SpaGa. The HS method of \citep{ghosh2022multivariate} performs less favorably for these data because it does not incorporate spatial dependence and cannot perform element-wise shrinkage. Table~\ref{tab:DIC} also reports the mean effective sample size across the components of $\bm{\theta}$. Because this value is close to the total number of posterior samples (10,000), on average, the MCMC sampling is considered efficient.
Regarding computation time, the proposed methods are slower than the HS and FH approaches, mainly due to the increased number of latent variables. In SpaHS and SpaGa, the number of local parameters $\lambda_{i,j}$ is twice that of HS, and additional matrix operations required for modeling spatial dependence further increase computational cost. Several strategies could mitigate this computational burden, and these will be discussed in the Discussion section. However, addressing them in depth is beyond the scope of the present paper. All computations were performed on a laptop equipped with an Apple M1 Max processor and 32 GB of unified memory, running macOS 15.7.2 (build 24G325). Additional information on the real data analysis is provided in Appendix~\ref{sec:appC}.

\begin{table}[htbp]
\caption{The Deviance Information Criterion (DIC) for various models, mean effective sample size (ESS) across components of small area mean $\bm{\theta}$ and the CPU time, where the MCMC algorithm was run for 20,000 iterations, discarding the first 10,000 as burn-in.}
\begin{center}
\begin{tabular}{c|rrrrr}\hline
Method & SpaHS & SpaGa & SpaFH & HS & FH \\ \hline
DIC & 1625.31 & 1655.53 &  1684.29 &  1736.84 & 1649.78 \\ 
ESS & 8178.62 & 7685.26 & 9571.31 & 6972.30 & 9965.19 \\ 
CPU time (min) & 21.64 & 19.78 & 11.86 &  7.44 &  4.79 \\ \hline
\end{tabular}
\end{center}
\label{tab:DIC}
\end{table}

\section{Discussion}
\label{sec:6}

The proposed model becomes computationally challenging when applied to large-scale data. For instance, as the number of small areas grows, the size of the matrices to be inverted increases accordingly, leading to a higher computational burden. To address this issue, approximate inference methods such as variational Bayes \citep{blei2017variational} or variational autoencoders \citep{kingma2013auto} could be considered as alternatives to MCMC. Indeed, such approaches have already been explored in the contexts of extreme value theory \citep{zhang2023flexible} and small area estimation \citep{wang2025variational}. Incorporating such approximate inference methods into the proposed framework remains an important direction for future research. The proposed prior can potentially be extended to spatially varying sparse coefficient models, where it may induce area-specific shrinkage of regression coefficients \citep{boehm2015spatial}.

An interesting direction for future research would be to extend the current framework to allow for spatial correlation parameters that vary across regions, as proposed in the generalized multivariate CAR models by \citep{jin2005generalized, jin2007order}. Incorporating such flexibility could potentially capture local spatial heterogeneity more effectively. Extending the model to a spatio-temporal setting also presents another promising avenue, enabling the analysis of dynamic changes over both space and time. However, since computations based on MCMC are expected to be considerably demanding, some methodological or computational refinements will likely be required.

\section*{Acknowledgments}

The authors' research was supported in part by JSPS KAKENHI Grant Numbers 25K07131 and 25K23108 from the Japan Society for the Promotion of Science.
The authors also thank Dr. Xueying Tang of University of Arizona for sharing us R code of the paper \cite{ghosh2022multivariate}.

\appendix

\section{Proof}
\label{sec:appA}

\begin{proof}[Proof of Proposition \ref{propriety}]
The joint posterior density is
\begin{align*}
\pi(\tilde{\bm{u}}, \bm{\beta}, \mathbf{\Sigma}, \mathbf{\Lambda}, \rho, \tau^2 \mid \bm{y})
&\propto \exp\left[-\frac{1}{2}(\bm{y}-\mathbf{X}\bm{\beta}-\mathbf{\Lambda}\tilde{\bm{u}})^{\top} \mathbf{V}^{-1}(\bm{y}-\mathbf{X}\bm{\beta}-\mathbf{\Lambda}\tilde{\bm{u}})\right]\\
&\quad \times |\tau^2((\mathbf{D}-\rho \mathbf{W})^{-1} \otimes \mathbf{\Sigma})|^{-1/2}  \exp\left[-\frac{1}{2}\tilde{\bm{u}}^{\top} \frac{1}{\tau^2}((\mathbf{D}-\rho \mathbf{W}) \otimes \mathbf{\Sigma}^{-1})\tilde{\bm{u}}\right]\\
&\quad \times \pi(\mathbf{\Sigma})\left(\prod_{i=1}^m \prod_{j=1}^k \pi(\lambda_{i,j}) \right) \pi(\rho) \pi(\tau^2).
\end{align*}
Let $\bm{z}=\bm{y} - \mathbf{\Lambda}\tilde{\bm{u}}$. Then
\begin{align*}
(\bm{y}-\mathbf{X}\bm{\beta}-\mathbf{\Lambda}\tilde{\bm{u}})^{\top} \mathbf{V}^{-1}(\bm{y}-\mathbf{X}\bm{\beta}-\mathbf{\Lambda}\tilde{\bm{u}})
&= (\bm{z}-\mathbf{X}\bm{\beta})^{\top} \mathbf{V}^{-1}(\bm{z}-\mathbf{X}\bm{\beta})\\
&= \bm{z}^{\top} \mathbf{V}^{-1/2} (\mathbf{I}_{mk} - \mathbf{P}_{\mathbf{X}}) \mathbf{V}^{-1/2}\bm{z},
\end{align*}
where $\mathbf{P}_{\mathbf{X}}=\mathbf{X}(\mathbf{X}^{\top}\mathbf{V}^{-1}\mathbf{X})\mathbf{X}^{\top}$. Integrating $\pi(\tilde{\bm{u}}, \bm{\beta}, \mathbf{\Sigma}, \mathbf{\Lambda}, \rho, \tau^2 \mid \bm{y})$ with respect to $\bm{\beta}$, we have
\begin{align*}
\pi(\tilde{\bm{u}}, \mathbf{\Sigma}, \mathbf{\Lambda}, \rho, \tau^2 \mid \bm{y})
&=K \exp\left[-\frac{1}{2}(\bm{y} - \mathbf{\Lambda}\tilde{\bm{u}})^{\top} \mathbf{V}^{-1/2} (\mathbf{I}_{mk} - \mathbf{P}_{\mathbf{X}}) \mathbf{V}^{-1/2}(\bm{y} - \mathbf{\Lambda}\tilde{\bm{u}})\right]\\
&\quad \times |\tau^2((\mathbf{D}-\rho \mathbf{W})^{-1} \otimes \mathbf{\Sigma})|^{-1/2}  \exp\left[-\frac{1}{2}\tilde{\bm{u}}^{\top} \frac{1}{\tau^2}((\mathbf{D}-\rho \mathbf{W}) \otimes \mathbf{\Sigma}^{-1})\tilde{\bm{u}}\right]\\
&\quad \times \pi(\mathbf{\Sigma}) \left(\prod_{i=1}^m \prod_{j=1}^k \pi(\lambda_{i,j}) \right) \pi(\rho) \pi(\tau^2)\\
& \le K |\tau^2((\mathbf{D}-\rho \mathbf{W})^{-1} \otimes \mathbf{\Sigma})|^{-1/2}  \exp\left[-\frac{1}{2}\tilde{\bm{u}}^{\top} \frac{1}{\tau^2}((\mathbf{D}-\rho \mathbf{W}) \otimes \mathbf{\Sigma}^{-1})\tilde{\bm{u}}\right]\\
&\quad \times \pi(\mathbf{\Sigma}) \left(\prod_{i=1}^m \prod_{j=1}^k \pi(\lambda_{i,j}) \right) \pi(\rho) \pi(\tau^2),
\end{align*}
where $K$ is a positive constant. If the priors of $\mathbf{\Sigma}$, $\rho$, $\tau$ and $\lambda_{ij}$ are proper, then the posterior density is proper.
    
\end{proof}

\section{Additional information on posterior computation}
\label{sec:appB}

We provide derivations of full conditional distributions, together with details of the sampling scheme for local parameters via the elliptical slice sampler.

\subsection{Derivation of algorithm}
\label{sec:B1}

\begin{itemize}
\item Sampling of regression coefficient vector $\bm{\beta}$.

The full conditional density of $\bm{\beta}$ is given by
\begin{align*}
\pi(\bm{\beta} \mid -)
&\propto \exp\left[-\frac{1}{2}(\bm{y}-\mathbf{X}\bm{\beta}-\mathbf{\Lambda}\tilde{\bm{u}})^{\top} \mathbf{V}^{-1}(\bm{y}-\mathbf{X}\bm{\beta}-\mathbf{\Lambda}\tilde{\bm{u}})\right]\\
&\propto \exp\left[-\frac{1}{2}(\bm{z}-\mathbf{X}\bm{\beta})^{\top} \mathbf{V}^{-1}(\bm{z}-\mathbf{X}\bm{\beta})\right] \quad (\bm{z}:=\bm{y}-\mathbf{\Lambda}\tilde{\bm{u}}),\\
& \propto \exp\left[-\frac{1}{2} (\bm{\beta}-\bm{m})^{\top}(\bm{X}^{\top}\mathbf{V}^{-1}\mathbf{X})(\bm{\beta}-\bm{m})\right],
\end{align*}
where $\bm{y} \in \mathbb{R}^{mk}$, $\mathbf{X}\in \mathbb{R}^{mk \times s}$, $\bm{\beta} \in \mathbb{R}^{s}$, $\tilde{\bm{u}}_i\in \mathbb{R}^{mk}$, $\mathbf{\Lambda}=\mathrm{block diag}(\mathbf{\Lambda}_1,\dots, \mathbf{\Lambda}_m) \in \mathbb{R}^{mk\times mk}$, $\mathbf{V}=\mathrm{block diag}(\mathbf{V}_1,\dots, \mathbf{V}_m) \in \mathbb{R}^{mk\times mk}$ and 
\[\bm{m}=(\mathbf{X}^{\top}\mathbf{V}^{-1}\mathbf{X})^{-1}\mathbf{X}^{\top}\mathbf{V}^{-1}\bm{z}, \quad \bm{z}:=\bm{y}-\mathbf{\Lambda}\tilde{\bm{u}}.\]
Hence, $\bm{\beta}\mid - \sim \mathcal{N}_s(\bm{m}, (\bm{X}^{\top}\mathbf{V}^{-1}\mathbf{X})^{-1})$.

\item Sampling of (transformed) random effect vector $\tilde{\bm{u}}$.

Using \eqref{MCAR-fullcond}, the full conditional density of $\tilde{\bm{u}}$ is given by
\begin{align*}
\pi(\tilde{\bm{u}}_i \mid -)
& \propto \exp\left[-\frac{1}{2}(\bm{y}_i-\mathbf{X}_i\bm{\beta}-\mathbf{\Lambda}_i\tilde{\bm{u}}_i)^{\top} \mathbf{V}_i^{-1}(\bm{y}_i-\mathbf{X}_i\bm{\beta}-\mathbf{\Lambda}_i\tilde{\bm{u}}_i)\right] \\
& \quad \times \exp\left(-\frac{1}{2}(\tilde{\bm{u}}_i- \bm{\mu}_{\mathrm{CAR},i})^{\top} \mathbf{\Omega}_{\mathrm{CAR},i}^{-1} (\tilde{\bm{u}}_i- \bm{\mu}_{\mathrm{CAR},i})\right)\\
&\propto \exp\left(-\frac{1}{2} \tilde{\bm{u}}_i^{\top} (\mathbf{\Lambda}_i \mathbf{V}_i^{-1}\mathbf{\Lambda}_i +\mathbf{\Omega}_{\mathrm{CAR},i}^{-1} )\tilde{\bm{u}}_i + \tilde{\bm{u}}_i^{\top} (\mathbf{\Lambda}_i \mathbf{V}_i^{-1}\bm{\xi}_i + \mathbf{\Omega}_{\mathrm{CAR},i}^{-1} \bm{\mu}_{\mathrm{CAR},i})\right) \\
& \qquad (\bm{\xi}_i:=\bm{y}_i-\mathbf{X}_i\bm{\beta}).
\end{align*}
where $\bm{\mu}_{\mathrm{CAR},i}=\rho\sum_{j\sim i} \frac{1}{w_{i+}} \tilde{\bm{u}}_j$ and $\mathbf{\Omega}_{\mathrm{CAR},i}=\frac{\tau^2}{w_{i+}} \mathbf{\Sigma}$.
Hence, $\tilde{\bm{u}}_i \mid - \sim \mathcal{N}_k(\mathbf{A}^{-1}\bm{b}, \mathbf{A}^{-1})$, where
\[\mathbf{A}=\mathbf{\Lambda}_i\mathbf{V}_i^{-1}\mathbf{\Lambda}_i +\mathbf{\Omega}_{\mathrm{CAR},i}^{-1}, \quad \bm{b}=\mathbf{\Lambda}_i \mathbf{V}_i^{-1}\bm{\xi}_i + \mathbf{\Omega}_{\mathrm{CAR},i}^{-1} \bm{\mu}_{\mathrm{CAR},i}.\]

\item Sampling of covariance matrix $\mathbf{\Sigma}$.

Let $\tilde{\mathbf{U}}:=(\bm{u}_1,\dots, \bm{u}_m) \in \mathbb{R}^{k\times m}$. We can also express as the matrix-normal distribution
\[\tilde{\mathbf{U}}\sim \mathcal{MN}_{k\times m} (\mathbf{O}, \mathbf{\Sigma}, \tau^2(\mathbf{D}-\rho \mathbf{W})^{-1})\]
whose density function is 
\begin{align}
f(\tilde{\mathbf{U}}) &= \frac{1}{(2\pi)^{km/2}} |\tau^2(\mathbf{D}-\rho \mathbf{W})^{-1}|^{-k/2} |\mathbf{\Sigma}|^{-m/2}\exp\left(-\frac{1}{2}\mathrm{tr}\left[\frac{1}{\tau^2}(\mathbf{D}-\rho \mathbf{W}) \tilde{\mathbf{U}}^{\top}\mathbf{\Sigma}^{-1} \tilde{\mathbf{U}}\right]\right) \notag \\
&\propto |\tau^2(\mathbf{D}-\rho \mathbf{W})^{-1}|^{-k/2} |\mathbf{\Sigma}|^{-m/2}\exp\left(-\frac{1}{2}\mathrm{tr}\left[\frac{1}{\tau^2}(\mathbf{D}-\rho \mathbf{W}) \tilde{\mathbf{U}}^{\top}\mathbf{\Sigma}^{-1} \tilde{\mathbf{U}}\right]\right) \notag\\
&\propto |\tau^2(\mathbf{D}-\rho \mathbf{W})^{-1}|^{-k/2} |\mathbf{\Sigma}|^{-m/2}\exp\left(-\frac{1}{2}\mathrm{tr}\left[ \tilde{\mathbf{U}}\frac{1}{\tau^2}(\mathbf{D}-\rho \mathbf{W}) \tilde{\mathbf{U}}^{\top}\mathbf{\Sigma}^{-1}\right]\right). \label{MatNorm}
\end{align}
Using \eqref{MatNorm}, the full conditional density of $\mathbf{\Sigma}$ is given by
\begin{align*}
\pi(\mathbf{\Sigma} \mid -)
&\propto  |\mathbf{\Sigma}|^{-m/2} \exp\left(-\frac{1}{2}\mathrm{tr}\left[ \tilde{\mathbf{U}}(\mathbf{D}-\rho \mathbf{W}) \tilde{\mathbf{U}}^{\top}\mathbf{\Sigma}^{-1}\right]\right)\\
& \quad \times |\mathbf{\Sigma}|^{-(\nu_0+k+1)/2} \exp\{-\mathrm{tr}(\mathbf{B} \mathbf{\Sigma}^{-1})/2\}\\
&= |\mathbf{\Sigma}|^{-(m+\nu_0+k+1)/2} \exp\left[-\frac{1}{2}\mathrm{tr}([\mathbf{B}+\tilde{\mathbf{U}}\frac{1}{\tau^2}(\mathbf{D}-\rho \mathbf{W}) \tilde{\mathbf{U}}^{\top}] \mathbf{\Sigma}^{-1})\right].
\end{align*}
Hence, $\mathbf{\Sigma}\mid - \sim \mathrm{IW}(m+\nu_0, \mathbf{B}+\tilde{\mathbf{U}}\frac{1}{\tau^2}(\mathbf{D}-\rho \mathbf{W}) \tilde{\mathbf{U}}^{\top})$.

\item Sampling of spatial correlation parameter $\rho$. 

The full conditional density of $\rho$ is given by
\begin{align*}
\pi(\rho \mid -) \propto  |\mathbf{D}-\rho \mathbf{W}|^{k/2} \times \exp\left[-\frac{1}{2}\tilde{\bm{u}}^{\top} \frac{1}{\tau^2} ((\mathbf{D}-\rho \mathbf{W}) \otimes \mathbf{\Sigma}^{-1})\tilde{\bm{u}}\right],
\end{align*}
where we use the fact $|\tau^2((\mathbf{D}-\rho \mathbf{W})^{-1} \otimes \mathbf{\Sigma})| =(\tau^2)^{mk} |(\mathbf{D}-\rho \mathbf{W})^{-1}|^k |\mathbf{\Sigma}|^m$. 

\item Sampling of global/local parameters
\begin{itemize}
\item The full conditional distribution of $\mathbf{\Lambda}$.
\begin{align*}
\pi(\bm{\lambda}_i \mid -) &
\propto \exp\left[-\frac{1}{2}(\bm{y}_i-\mathbf{X}_i\bm{\beta}-\mathbf{\Lambda}_i\tilde{\bm{u}}_i)^{\top} \mathbf{V}_i^{-1}(\bm{y}_i-\mathbf{X}_i\bm{\beta}-\mathbf{\Lambda}_i\tilde{\bm{u}}_i)\right] \prod_{j=1}^k\pi(\lambda_{i,j}) \notag \\
&= \exp\left[-\frac{1}{2}\left(\frac{\bm{y}_i-\mathbf{X}_i\bm{\beta}}{\tilde{\bm{u}}_i}-\bm{\lambda}_i\right)^{\top} \tilde{\mathbf{U}}_i\mathbf{V}_i^{-1}\tilde{\mathbf{U}}_i\left(\frac{\bm{y}_i-\mathbf{X}_i\bm{\beta}}{\tilde{\bm{u}}_i}-\bm{\lambda}_i\right)\right] \prod_{j=1}^k\pi(\lambda_{i,j}),
\end{align*}
where the fraction $\frac{\bm{y}_i-\mathbf{X}_i\bm{\beta}}{\tilde{\bm{u}}_i}$ is applied elementwise and $\tilde{\mathbf{U}}_i=\mathrm{diag}(\tilde{v}_{i1},\dots,\tilde{v}_{ik})$.
Let
\[\bm{\mu}_{\bm{\lambda}_i}:=\frac{\bm{y}_i-\mathbf{X}_i\bm{\beta}}{\tilde{\bm{u}}_i}, \quad \mathbf{\Omega}_{\bm{\lambda}_i}=(\tilde{\mathbf{U}}_i\mathbf{V}_i^{-1}\tilde{\mathbf{U}}_i)^{-1}.\]

\item Sampling of $\tau^2$ (horseshoe prior)
\begin{align*}
\pi(\tau^2 \mid -) \propto (\tau^2)^{-mk/2}\exp\left[-\frac{1}{2}\tilde{\bm{u}}^{\top} \frac{1}{\tau^2}((\mathbf{D}-\rho \mathbf{W}) \otimes \mathbf{\Sigma}^{-1})\tilde{\bm{u}}\right] \pi(\tau^2)
\end{align*}
When $\tau\sim \mathrm{C}_+(0,1)$, using 
\[\tau\sim \mathrm{C}_+(0,1) \iff \tau^2\mid \psi \sim \mathrm{IG}(1/2, 1/\psi), \quad \psi\sim \mathrm{IG}(1/2,1),\]
we have
\begin{align*}
\pi(\tau^2\mid \psi, \mathrm{others})
&\propto (\tau^2)^{-\frac{mk+1}{2}-1} \exp\left[-\frac{1}{\tau^2} \left(\frac{1}{2}\tilde{\bm{u}}^{\top} ((\mathbf{D}-\rho \mathbf{W}) \otimes \mathbf{\Sigma}^{-1})\tilde{\bm{u}}  + \frac{1}{\psi}\right)\right]\\
&\sim \mathrm{IG}\left(\frac{mk+1}{2},\frac{1}{2}\tilde{\bm{u}}^{\top} ((\mathbf{D}-\rho \mathbf{W}) \otimes \mathbf{\Sigma}^{-1})\tilde{\bm{u}}  + \frac{1}{\psi} \right),\\
\pi(\psi\mid \tau^2) &\propto \psi^{-1/2-1}\exp\left(-\frac{1}{\psi}\left(\frac{1}{\tau^2} + 1\right)\right)\\
&\sim \mathrm{IG}\left(\frac{1}{2},\frac{1}{\tau^2} + 1 \right)
\end{align*}

\item Sampling of $a$ and $b$ (normal-gamma prior)

Under the prior $a\sim U(0,1)$ and $b\sim \mathrm{Ga}(c,d)$, the full conditional densities for $a$ and $b$ are given by
\begin{align*}
\pi(b\mid -) &  \propto \left(\prod_{i=1}^m \prod_{j=1}^k b^a\exp(-b \lambda_{i,j}^2) \right) \times b^{c-1} \exp(-db)\\
&= b^{(amk + c)-1} \exp\left(-b\left( \sum_{i=1}^m \sum_{j=1}^k \lambda_{i,j}^2 + d\right)\right)\\
&\sim \mathrm{Ga}\left(amk + c, \sum_{i=1}^m \sum_{j=1}^k \lambda_{i,j}^2 + d \right),\\
\pi(a\mid -)&\propto \prod_{i=1}^m \prod_{j=1}^k \frac{b^a}{\Gamma(a)} \lambda_{i,j}^{2a-1}= \left(\frac{b^a}{\Gamma(a)}\right)^{mk} \left( \prod_{i=1}^m \prod_{j=1}^k \lambda_{i,j} \right)^{2a-1}.
\end{align*}

\end{itemize}
\end{itemize}

\subsection{Sampling of local parameter}
\label{sec:appB2}

We provide an overview of the elliptical slice sampler used for sampling $\bm{\lambda}_1, \dots, \bm{\lambda}_m$. For further details, the reader is referred to \cite{murray2010elliptical} and \cite{hahn2019efficient}. We define 
\[\pi^*(\bm{\lambda}_i)=\prod_{j=1}^k \left\{\pi(\lambda_{ij}) \frac{1}{1+\exp(-\eta \lambda_{ij})}\right\},\]
where $\eta$ is a large constant (e.g., $\eta=100$). For $i=1,\dots,m$, let $\bm{\Delta}_i = \bm{\lambda}_{i}-\bm{\mu}_{\bm{\lambda}_{i}}$. Sampling is performed as follows.
\begin{itemize}
\item[1.] Draw $\bm{\zeta}_i\sim \mathcal{N}_k(\bm{0}, \mathbf{\Omega}_{\bm{\lambda}_{i}})$\\
Set $\bm{\zeta}_{0i}=\bm{\Delta}_i \sin \theta + \bm{\zeta}_i\cos \theta$\\
Set $\bm{u}_{1i}=\bm{\Delta}_i \cos\theta -\bm{\zeta}_i\sin \theta$.
\item[2.] Draw $\ell$ from $\mathcal{U}(0, \pi^*(\bm{\mu}_{\bm{\lambda}_{i}} + \bm{\zeta}_{0i}\sin \theta + \bm{\zeta}_{1i}\cos \theta))$.\\
Initialize $a=0$ and $b=2\pi$.
\begin{itemize}
\item[2.1] Sample $\theta'$ from $\mathcal{U}(a,b)$.
\item[2.2] If $\pi^*(\bm{\mu}_{\bm{\lambda}_{i}} + \bm{\zeta}_{0i}\sin \theta' + \bm{\zeta}_{1i}\cos \theta')>\ell$\\
Then: Set $\theta \leftarrow \theta'$. Go to step 3.\\
Else: If $\theta'<\theta$, set $a\leftarrow \theta'$, else set $b\leftarrow \theta'$. Go to step 2.1.
\end{itemize}
\item[3.] Return $\bm{\Delta}_i=\bm{\zeta}_{0i} \sin \theta + \bm{\zeta}_{1i}\cos \theta$ and $\bm{\lambda}_{i}=\bm{\mu}_{\bm{\lambda}_i}+\bm{\Delta}_i$.
\end{itemize}
If we assume the horseshoe prior, we set $\pi(\lambda_{i,j})=(2/\pi)(1+\lambda_{i,j}^2)^{-1}$. When we use the normal-gamma type prior $\lambda_{i,j}^2 \sim \mathrm{Ga}(a,b)$, we set $\pi(\lambda_{i,j})=\frac{2 b^a}{\Gamma(a)} \lambda_{i,j}^{2a-1}\exp(-b \lambda_{i,j}^2)$.

\section{Additional information on real data analysis}
\label{sec:appC}

We present additional results from the data analysis described in Section~\ref{sec:5}.

We conduct an exploratory investigation of the shrinkage behavior of random effects under different models.
As a benchmark, we first consider the Fay--Herriot (FH) model, which is the most basic among the competing methods.
Let $\bm{e} = \bm{y} - \mathbf{X}\hat{\bm{\beta}}$ 
denote the marginal residuals under the FH model, where $\hat{\bm{\beta}}$ is the posterior mean of $\bm{\beta}$. The FH model is used as the baseline for no reason other than the fact that it is the most basic and widely used model in small area estimation. In the data analysis, the estimated regression coefficients were found to be almost identical across all the considered methods, and thus no noticeable differences were observed among the methods in terms of regression coefficient estimation.
Note that $\bm{e}$ is not the usual regression residual; rather, it is obtained after marginalizing out the random effects. Thus, the analysis should be regarded as exploratory. 
From this perspective, $\bm{e}$ can be regarded as pseudo-data, and estimating random effects under sparse models may be interpreted as applying shrinkage to this pseudo-data.
To examine the degree and structure of shrinkage, we analyze the sparsity patterns in $\bm{e}$ and their relationship with the estimated random effects.

Figure~\ref{fig:marginal-residual} presents a scatter plot of the two-dimensional residual vectors $\bm{e}$.
The color of each point represents a similarity measure between the two components of $\bm{e}_i=(e_{i1}, e_{i2})$ (after normalization in each dimension).
Specifically, we adopt a Bray--Curtis type similarity \citep{bray1957ordination} defined by
\[
\delta_i
=
\frac{\bigl||e_{i1}| - |e_{i2}|\bigr|}
{|e_{i1}| + |e_{i2}|},
\qquad
\text{Similarity}_i = 1 - \delta_i
\]
for $i=1,\dots,m$. This measure takes values close to $1$ when the absolute values of the two components are similar, and values close to $0$ when they differ substantially.
Points lying near the coordinate axes correspond to cases where one component is close to zero while the other is not, resulting in low similarity (blue).
In contrast, points along the diagonal lines exhibit high similarity (red), indicating that the magnitudes of the two components are comparable.
Based on this similarity measure, we divide the observations into groups with low similarity (one component sparse) and high similarity (both components sparse or both large), and investigate how different methods shrink the random effects in each case.

We first consider the case where the similarity measure is at most $0.25$, corresponding to observations whose components differ substantially.
This setting represents situations in which sparsity is uneven across dimensions.
Figure~\ref{fig:residual-rf-comparison-1} displays scatter plots of $\bm{e}$ against the point estimates of the random effects obtained by SpaHS, SpaFH, and HS, shown separately for each dimension.
The color indicates the value of the second component of $\bm{e}_i$ (poverty rate), which is shared across the panels.
For observations colored in blue, where the second component is small, the first component (median household income) is close to zero.
In this case, all methods shrink the estimated random effects toward zero in the first dimension.
However, for the second dimension, both SpaFH and HS also shrink the estimates toward zero, even when the residual signal is present.
This suggests that, since these methods perform shrinkage jointly across dimensions, a near-zero random effect in median household income leads to excessive shrinkage in poverty rate as well.
In contrast, SpaHS mitigates this effect, exhibiting weaker shrinkage in the second dimension.
A similar phenomenon is observed for the orange-colored points, where the poverty rate component is close to zero.
Although all methods appropriately shrink the poverty rate effects, HS and SpaFH also excessively shrink the median household income effects, whereas SpaHS better preserves the nonzero signal.

Next, we examine the case where the similarity is at least $0.75$, corresponding to observations whose components have similar magnitudes.
The results are shown in Figure~\ref{fig:residual-rf-comparison-2}.
These observations represent cases in which both components are either sparse or large.
In this setting, HS and SpaHS yield similar shrinkage behavior, as both dimensions are shrunk or retained simultaneously.
In particular, when the residuals are close to zero, the estimates are shrunk toward zero.
For moderate residual values, the estimated random effects exhibit asymmetric shrinkage: for negative residuals, the estimates lie above the line $y=x$, while for positive residuals they lie below it.
SpaFH, on the other hand, exhibits a noticeably different pattern from HS and SpaHS in this regime.

Although this analysis is exploratory in nature, the results consistently indicate that the proposed SpaHS method is capable of performing component-specific shrinkage of random effects.
Through the real data analysis, we observe that SpaHS can adaptively shrink each component of the random-effect vector, thereby alleviating the over-shrinkage problem inherent in models that impose common shrinkage across dimensions.

\begin{figure}[htbp]
    \centering
    \includegraphics[width=0.6\linewidth]{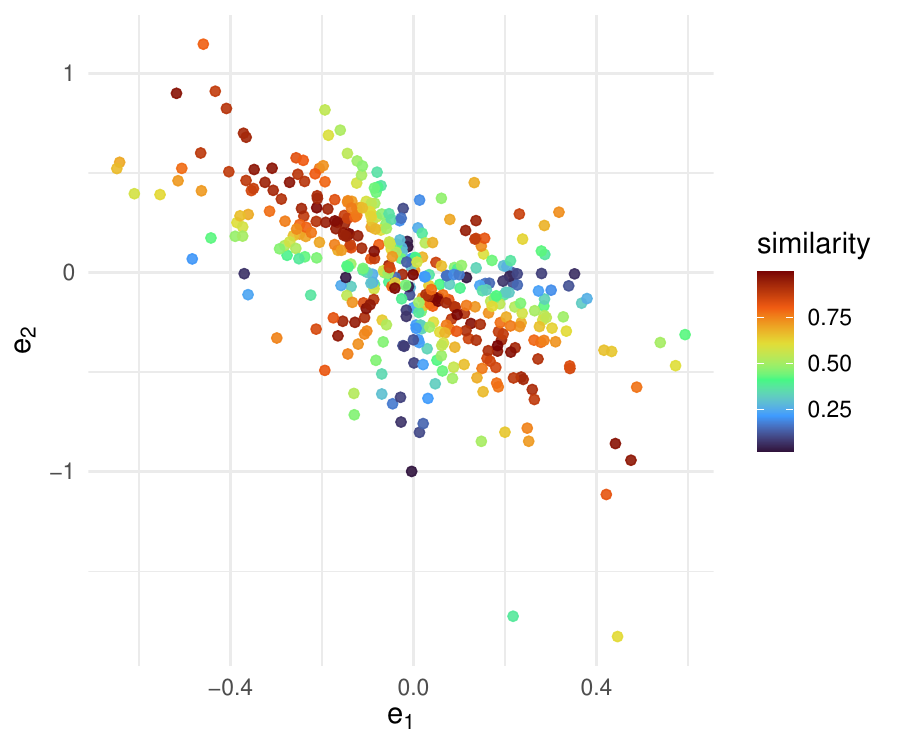}
    \caption{
Scatter plot of the two-dimensional marginal residual
$\bm{e}_i = (e_{i1}, e_{i2})$ obtained under the Fay--Herriot (FH) model.
Each point corresponds to a small area.
The color indicates the similarity between the absolute values of the two components,
defined by a Bray--Curtis type similarity
$1 - \delta_i$, where
$\delta_i = \bigl||e_{i1}| - |e_{i2}|\bigr| / (|e_{i1}| + |e_{i2}|)$.
}
  \label{fig:marginal-residual}
\end{figure}

\begin{figure}[htbp]
    \centering
    \includegraphics[width=\linewidth]{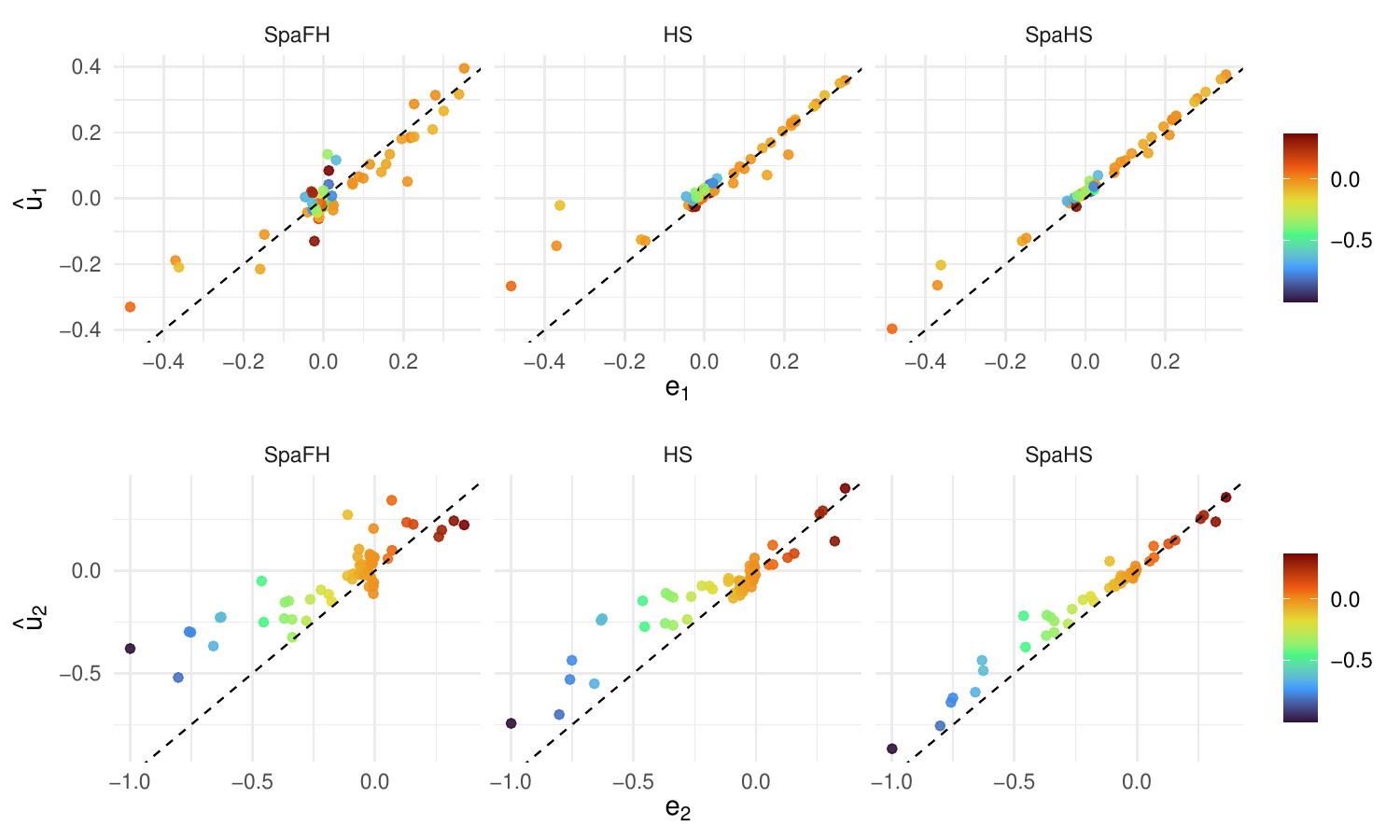}
    \caption{
Scatter plots of the FH marginal residual $\bm{e}_i=(e_{i1}, e_{i2})$ versus the estimated random effects
under SpaHS, SpaFH, and HS, for areas with low similarity
($1-\delta_i \le 0.25$).
These observations correspond to situations where sparsity is uneven across dimensions,
that is, one component is close to zero while the other is nonzero.
The upper and lower panels display the results for the first component
(median household income) and the second component (poverty rate), respectively.
Points are colored according to the value of the second residual component $e_{i2}$,
with blue indicating values close to zero and orange indicating larger magnitudes.
}
\label{fig:residual-rf-comparison-1}
\end{figure}

\begin{figure}[htbp]
    \centering
    \includegraphics[width=\linewidth]{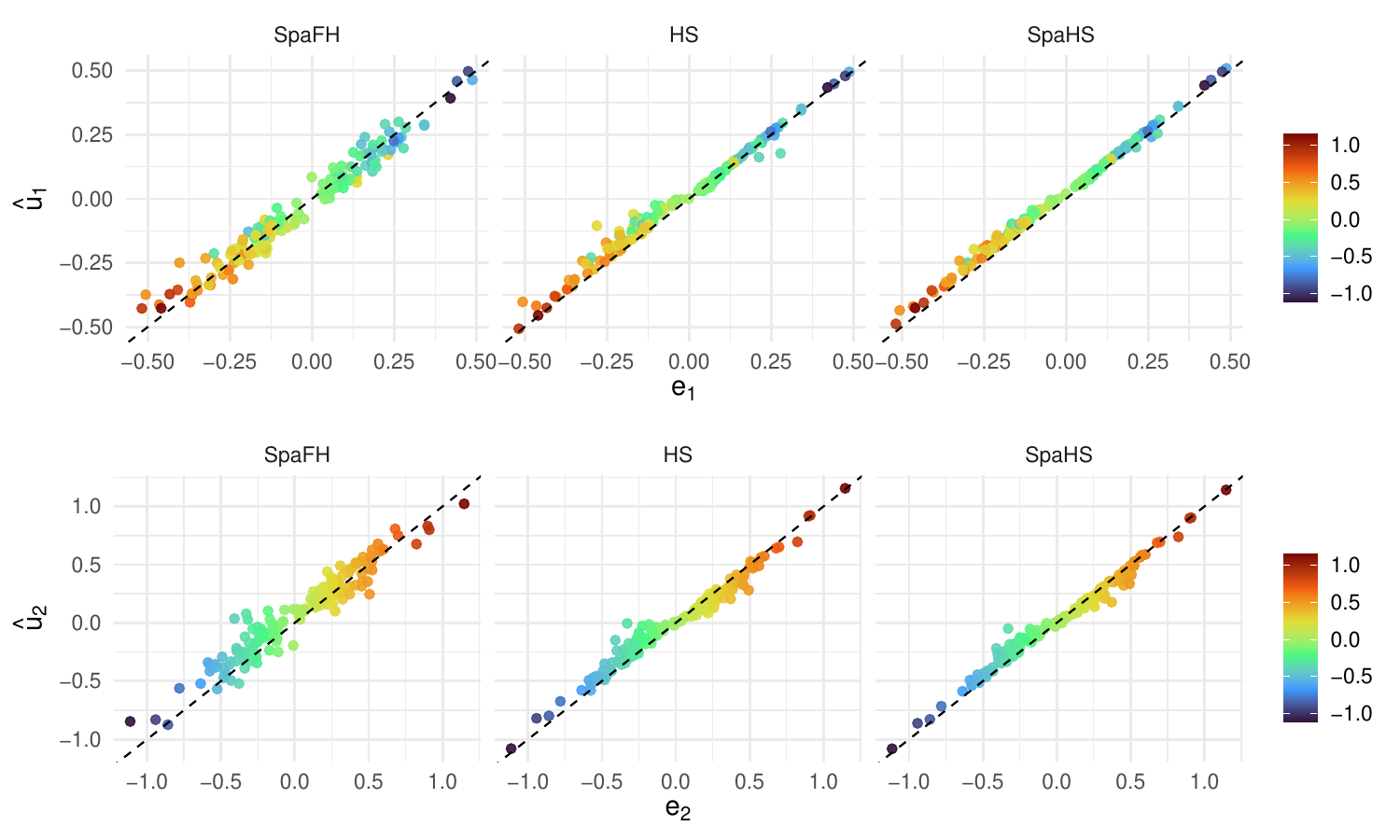}
    \caption{
Scatter plots of the FH marginal residual $\bm{e}_i=(e_{i1}, e_{i2})$ versus the estimated random effects
under SpaHS, SpaFH, and HS, for areas with high similarity
($1-\delta_i \ge 0.75$).
These observations correspond to cases where both components of $\bm{e}_i$ are either small
or large, indicating similar sparsity patterns across dimensions.
As in Figure~\ref{fig:residual-rf-comparison-1},
the upper and lower panels show the first and second components, respectively,
and points are colored according to the value of $e_{i2}$.
}
\label{fig:residual-rf-comparison-2}
\end{figure}

\vspace{5mm}
\bibliography{refs}

\begin{thebibliography}{}

\bibitem[\protect\citeauthoryear{Battese, Harter, and Fuller}{Battese
  et~al.}{1988}]{Battese01031988}
Battese, G.~E., R.~M. Harter, and W.~A. Fuller (1988).
\newblock An error-components model for prediction of county crop areas using
  survey and satellite data.
\newblock {\em Journal of the American Statistical Association\/}~{\em
  83\/}(401), 28--36.

\bibitem[\protect\citeauthoryear{Benavent and Morales}{Benavent and
  Morales}{2016}]{benavent2016multivariate}
Benavent, R. and D.~Morales (2016).
\newblock Multivariate fay--herriot models for small area estimation.
\newblock {\em Computational Statistics \& Data Analysis\/}~{\em 94}, 372--390.

\bibitem[\protect\citeauthoryear{Bhattacharya, Pati, Pillai, and
  Dunson}{Bhattacharya et~al.}{2015}]{bhattacharya2015dirichlet}
Bhattacharya, A., D.~Pati, N.~S. Pillai, and D.~B. Dunson (2015).
\newblock Dirichlet--laplace priors for optimal shrinkage.
\newblock {\em Journal of the American Statistical Association\/}~{\em
  110\/}(512), 1479--1490.

\bibitem[\protect\citeauthoryear{Blei, Kucukelbir, and McAuliffe}{Blei
  et~al.}{2017}]{blei2017variational}
Blei, D.~M., A.~Kucukelbir, and J.~D. McAuliffe (2017).
\newblock Variational inference: A review for statisticians.
\newblock {\em Journal of the American statistical Association\/}~{\em
  112\/}(518), 859--877.

\bibitem[\protect\citeauthoryear{Boehm~Vock, Reich, Fuentes, and
  Dominici}{Boehm~Vock et~al.}{2015}]{boehm2015spatial}
Boehm~Vock, L.~F., B.~J. Reich, M.~Fuentes, and F.~Dominici (2015).
\newblock Spatial variable selection methods for investigating acute health
  effects of fine particulate matter components.
\newblock {\em Biometrics\/}~{\em 71\/}(1), 167--177.

\bibitem[\protect\citeauthoryear{Bray and Curtis}{Bray and
  Curtis}{1957}]{bray1957ordination}
Bray, J.~R. and J.~T. Curtis (1957).
\newblock An ordination of the upland forest communities of southern wisconsin.
\newblock {\em Ecological Monographs\/}~{\em 27}, 325--349.

\bibitem[\protect\citeauthoryear{Carlin and Banerjee}{Carlin and
  Banerjee}{2003}]{carlin2003hierarchical}
Carlin, B.~P. and S.~Banerjee (2003).
\newblock Hierarchical multivariate car models for spatio-temporally correlated
  survival data.
\newblock In {\em Bayesian Statistics 7: Proceedings of the Seventh Valencia
  International Meeting}. Oxford University Press.

\bibitem[\protect\citeauthoryear{Carvalho, Polson, and Scott}{Carvalho
  et~al.}{2010}]{carvalho2010horseshoe}
Carvalho, C.~M., N.~G. Polson, and J.~G. Scott (2010).
\newblock The horseshoe estimator for sparse signals.
\newblock {\em Biometrika\/}, 465--480.

\bibitem[\protect\citeauthoryear{Chakraborty, Datta, and Mandal}{Chakraborty
  et~al.}{2016}]{chakraborty2016two}
Chakraborty, A., G.~S. Datta, and A.~Mandal (2016).
\newblock A two-component normal mixture alternative to the fay-herriot model.
\newblock {\em Statistics in Transition New Series\/}~{\em 17\/}(1), 67--90.

\bibitem[\protect\citeauthoryear{Chung and Datta}{Chung and
  Datta}{2022}]{chung2022bayesian}
Chung, H.~C. and G.~S. Datta (2022).
\newblock Bayesian spatial models for estimating means of sampled and
  non-sampled small areas.
\newblock {\em Survey Methodology\/}~{\em 48\/}(2), 463--489.

\bibitem[\protect\citeauthoryear{Datta, Day, and Maiti}{Datta
  et~al.}{1998}]{datta1998multivariate}
Datta, G.~S., B.~Day, and T.~Maiti (1998).
\newblock Multivariate bayesian small area estimation: an application to survey
  and satellite data.
\newblock {\em Sankhy{\=a}: The Indian Journal of Statistics, Series A\/},
  344--362.

\bibitem[\protect\citeauthoryear{Datta, Hall, and Mandal}{Datta
  et~al.}{2011}]{datta2011model}
Datta, G.~S., P.~Hall, and A.~Mandal (2011).
\newblock Model selection by testing for the presence of small-area effects,
  and application to area-level data.
\newblock {\em Journal of the American Statistical Association\/}~{\em
  106\/}(493), 362--374.

\bibitem[\protect\citeauthoryear{Datta, Lahiri, Maiti, and Lu}{Datta
  et~al.}{1999}]{datta1999hierarchical}
Datta, G.~S., P.~Lahiri, T.~Maiti, and K.~L. Lu (1999).
\newblock Hierarchical bayes estimation of unemployment rates for the states of
  the us.
\newblock {\em Journal of the American Statistical Association\/}~{\em
  94\/}(448), 1074--1082.

\bibitem[\protect\citeauthoryear{Datta and Mandal}{Datta and
  Mandal}{2015}]{datta2015small}
Datta, G.~S. and A.~Mandal (2015).
\newblock Small area estimation with uncertain random effects.
\newblock {\em Journal of the American Statistical Association\/}~{\em
  110\/}(512), 1735--1744.

\bibitem[\protect\citeauthoryear{Datta, Rao, and Smith}{Datta
  et~al.}{2005}]{datta2005measuring}
Datta, G.~S., J.~Rao, and D.~D. Smith (2005).
\newblock On measuring the variability of small area estimators under a basic
  area level model.
\newblock {\em Biometrika\/}~{\em 92\/}(1), 183--196.

\bibitem[\protect\citeauthoryear{Fay}{Fay}{1987}]{fay1987application}
Fay, R.~E. (1987).
\newblock Application of multivariate regression to small domain estimation.
\newblock {\em Small area statistics\/}, 91--102.

\bibitem[\protect\citeauthoryear{Fay and Herriot}{Fay and
  Herriot}{1979}]{fay1979estimates}
Fay, R.~E. and R.~A. Herriot (1979).
\newblock Estimates of income for small places: an application of james-stein
  procedures to census data.
\newblock {\em Journal of the American Statistical Association\/}~{\em
  74\/}(366a), 269--277.

\bibitem[\protect\citeauthoryear{Gelfand and Vounatsou}{Gelfand and
  Vounatsou}{2003}]{gelfand2003proper}
Gelfand, A.~E. and P.~Vounatsou (2003).
\newblock Proper multivariate conditional autoregressive models for spatial
  data analysis.
\newblock {\em Biostatistics\/}~{\em 4\/}(1), 11--15.

\bibitem[\protect\citeauthoryear{Ghosh, Ghosh, Maples, and Tang}{Ghosh
  et~al.}{2022}]{ghosh2022multivariate}
Ghosh, T., M.~Ghosh, J.~J. Maples, and X.~Tang (2022).
\newblock Multivariate global-local priors for small area estimation.
\newblock {\em Stats\/}~{\em 5\/}(3), 673--688.

\bibitem[\protect\citeauthoryear{Griffin and Brown}{Griffin and
  Brown}{2010}]{griffin2010inference}
Griffin, J.~E. and P.~J. Brown (2010).
\newblock Inference with normal-gamma prior distributions in regression
  problems.
\newblock {\em Bayesian Analysis\/}~{\em 5\/}(1), 171--188.

\bibitem[\protect\citeauthoryear{Hahn, He, and Lopes}{Hahn
  et~al.}{2019}]{hahn2019efficient}
Hahn, P.~R., J.~He, and H.~F. Lopes (2019).
\newblock Efficient sampling for gaussian linear regression with arbitrary
  priors.
\newblock {\em Journal of Computational and Graphical Statistics\/}~{\em
  28\/}(1), 142--154.

\bibitem[\protect\citeauthoryear{Hamura, Irie, and Sugasawa}{Hamura
  et~al.}{2025}]{hamura2025outlier}
Hamura, Y., K.~Irie, and S.~Sugasawa (2025).
\newblock Outlier-robust bayesian multivariate analysis with correlation-intact
  sandwich mixture.
\newblock {\em arXiv preprint arXiv:2508.18004\/}.

\bibitem[\protect\citeauthoryear{Ito and Kubokawa}{Ito and
  Kubokawa}{2020}]{ito2020robust}
Ito, T. and T.~Kubokawa (2020).
\newblock Robust estimation of mean squared error matrix of small area
  estimators in a multivariate fay--herriot model.
\newblock {\em Japanese Journal of Statistics and Data Science\/}~{\em 3\/}(1),
  39--61.

\bibitem[\protect\citeauthoryear{Ito and Kubokawa}{Ito and
  Kubokawa}{2021}]{ITO202112}
Ito, T. and T.~Kubokawa (2021).
\newblock Corrected empirical bayes confidence region in a multivariate
  fay--herriot model.
\newblock {\em Journal of Statistical Planning and Inference\/}~{\em 211},
  12--32.

\bibitem[\protect\citeauthoryear{Jin, Banerjee, and Carlin}{Jin
  et~al.}{2007}]{jin2007order}
Jin, X., S.~Banerjee, and B.~P. Carlin (2007).
\newblock Order-free co-regionalized areal data models with application to
  multiple-disease mapping.
\newblock {\em Journal of the Royal Statistical Society Series B: Statistical
  Methodology\/}~{\em 69\/}(5), 817--838.

\bibitem[\protect\citeauthoryear{Jin, Carlin, and Banerjee}{Jin
  et~al.}{2005}]{jin2005generalized}
Jin, X., B.~P. Carlin, and S.~Banerjee (2005).
\newblock Generalized hierarchical multivariate car models for areal data.
\newblock {\em Biometrics\/}~{\em 61\/}(4), 950--961.

\bibitem[\protect\citeauthoryear{Kingma and Welling}{Kingma and
  Welling}{2013}]{kingma2013auto}
Kingma, D.~P. and M.~Welling (2013).
\newblock Auto-encoding variational bayes.
\newblock {\em arXiv preprint arXiv:1312.6114\/}.

\bibitem[\protect\citeauthoryear{MacNab}{MacNab}{2022}]{macnab2022bayesian}
MacNab, Y.~C. (2022).
\newblock Bayesian disease mapping: Past, present, and future.
\newblock {\em Spatial Statistics\/}~{\em 50}, 100593.

\bibitem[\protect\citeauthoryear{Mardia}{Mardia}{1988}]{mardia1988multi}
Mardia, K. (1988).
\newblock Multi-dimensional multivariate gaussian markov random fields with
  application to image processing.
\newblock {\em Journal of Multivariate Analysis\/}~{\em 24\/}(2), 265--284.

\bibitem[\protect\citeauthoryear{Murray, Adams, and MacKay}{Murray
  et~al.}{2010}]{murray2010elliptical}
Murray, I., R.~Adams, and D.~MacKay (2010).
\newblock Elliptical slice sampling.
\newblock In {\em Proceedings of the Thirteenth International Conference on
  Artificial Intelligence and Statistics}, pp.\  541--548. JMLR Workshop and
  Conference Proceedings.

\bibitem[\protect\citeauthoryear{Onizuka, Iwashige, and Hashimoto}{Onizuka
  et~al.}{2024}]{onizuka2024bayesian}
Onizuka, T., F.~Iwashige, and S.~Hashimoto (2024).
\newblock Bayesian boundary trend filtering.
\newblock {\em Computational Statistics \& Data Analysis\/}~{\em 191}, 107889.

\bibitem[\protect\citeauthoryear{Park and Casella}{Park and
  Casella}{2008}]{park2008bayesian}
Park, T. and G.~Casella (2008).
\newblock The bayesian lasso.
\newblock {\em Journal of the American Statistical Association\/}~{\em
  103\/}(482), 681--686.

\bibitem[\protect\citeauthoryear{Polson, Scott, and Windle}{Polson
  et~al.}{2013}]{polson2013bayesian}
Polson, N.~G., J.~G. Scott, and J.~Windle (2013).
\newblock Bayesian inference for logistic models using p{\'o}lya--gamma latent
  variables.
\newblock {\em Journal of the American statistical Association\/}~{\em
  108\/}(504), 1339--1349.

\bibitem[\protect\citeauthoryear{Porter, Holan, Wikle, and Cressie}{Porter
  et~al.}{2014}]{porter2014spatial}
Porter, A.~T., S.~H. Holan, C.~K. Wikle, and N.~Cressie (2014).
\newblock Spatial fay--herriot models for small area estimation with functional
  covariates.
\newblock {\em Spatial Statistics\/}~{\em 10}, 27--42.

\bibitem[\protect\citeauthoryear{Porter, Wikle, and Holan}{Porter
  et~al.}{2015}]{porter2015small}
Porter, A.~T., C.~K. Wikle, and S.~H. Holan (2015).
\newblock Small area estimation via multivariate fay--herriot models with
  latent spatial dependence.
\newblock {\em Australian \& New Zealand Journal of Statistics\/}~{\em
  57\/}(1), 15--29.

\bibitem[\protect\citeauthoryear{Rao and Molina}{Rao and
  Molina}{2015}]{rao2015small}
Rao, J.~N. and I.~Molina (2015).
\newblock {\em Small area estimation}.
\newblock John Wiley \& Sons.

\bibitem[\protect\citeauthoryear{Ray, Pati, and Bhattacharya}{Ray
  et~al.}{2020}]{ray2020efficient}
Ray, P., D.~Pati, and A.~Bhattacharya (2020).
\newblock Efficient bayesian shape-restricted function estimation with
  constrained gaussian process priors.
\newblock {\em Statistics and Computing\/}~{\em 30\/}(4), 839--853.

\bibitem[\protect\citeauthoryear{Spiegelhalter, Best, Carlin, and Van
  Der~Linde}{Spiegelhalter et~al.}{2002}]{spiegelhalter2002bayesian}
Spiegelhalter, D.~J., N.~G. Best, B.~P. Carlin, and A.~Van Der~Linde (2002).
\newblock Bayesian measures of model complexity and fit.
\newblock {\em Journal of the Royal Statistical Society: Series B (Statistical
  Methodology)\/}~{\em 64\/}(4), 583--639.

\bibitem[\protect\citeauthoryear{Sugasawa and Kubokawa}{Sugasawa and
  Kubokawa}{2020}]{sugasawa2020small}
Sugasawa, S. and T.~Kubokawa (2020).
\newblock Small area estimation with mixed models: a review.
\newblock {\em Japanese Journal of Statistics and Data Science\/}~{\em 3\/}(2),
  693--720.

\bibitem[\protect\citeauthoryear{Tang and Ghosh}{Tang and
  Ghosh}{2023}]{tang2023global}
Tang, X. and M.~Ghosh (2023).
\newblock Global-local priors for spatial small area estimation.
\newblock {\em Calcutta Statistical Association Bulletin\/}~{\em 75\/}(2),
  141--154.

\bibitem[\protect\citeauthoryear{Tang, Ghosh, Ha, and Sedransk}{Tang
  et~al.}{2018}]{tang2018modeling}
Tang, X., M.~Ghosh, N.~S. Ha, and J.~Sedransk (2018).
\newblock Modeling random effects using global--local shrinkage priors in small
  area estimation.
\newblock {\em Journal of the American Statistical Association\/}~{\em
  113\/}(524), 1476--1489.

\bibitem[\protect\citeauthoryear{Tang and Zhang}{Tang and
  Zhang}{2024}]{tang2024hierarchical}
Tang, X. and L.~Zhang (2024).
\newblock A hierarchical gamma prior for modeling random effects in small area
  estimation.
\newblock {\em Survey Methodology\/}~{\em 50\/}(2), 287--301.

\bibitem[\protect\citeauthoryear{Wang, Parker, and Holan}{Wang
  et~al.}{2025}]{wang2025variational}
Wang, Z., P.~A. Parker, and S.~H. Holan (2025).
\newblock Variational autoencoded multivariate spatial fay-herriot models.
\newblock {\em arXiv preprint arXiv:2503.14710\/}.

\bibitem[\protect\citeauthoryear{Zhang, Ma, Wikle, and Huser}{Zhang
  et~al.}{2023}]{zhang2023flexible}
Zhang, L., X.~Ma, C.~K. Wikle, and R.~Huser (2023).
\newblock Flexible and efficient spatial extremes emulation via variational
  autoencoders.
\newblock {\em arXiv preprint arXiv:2307.08079\/}.

\end{thebibliography}
\bibliographystyle{chicago}

\end{document}